\documentclass[11pt,reqno]{amsart}

\usepackage{amscd,amssymb,amsmath,amsthm}
\usepackage[arrow,matrix]{xy}
\usepackage{graphicx}
\usepackage{color}
\usepackage{cite}
\topmargin=0.1in \textwidth5.8in \textheight7.8in

\newtheorem{thm}{Theorem}

\newtheorem{lemma}{Lemma}
\newtheorem{pro}{Proposition}

\newtheorem{cor}{Corollary}

\numberwithin{equation}{section} \setcounter{tocdepth}{1}


\sloppy


\begin{document}
\title[The $p$-adic Solid-on-Solid model]
{On $p$-adic Solid-on-Solid model on a Cayley tree}

\author{O. N. Khakimov}

\address{O.N. Khakimov\\ Institute of mathematics,
29, Do'rmon Yo'li str., 100125, Tashkent, Uzbekistan.}
\email {hakimovo@mail.ru}

\maketitle

\begin{abstract} We consider a nearest-neighbor $p$-adic Solid-on-Solid
(SOS) model
 with $m+1$ spin values and coupling constant $J\in\mathbb Q_p$ on a
Cayley tree. It is found conditions under which a phase
transition does not occur for this model. It is shown that under condition
$p\mid m+1$ for some $J$ a phase transition occurs. Moreover, we give
criterion of boundedness of
$p$-adic Gibbs measures for the $m+1$-state SOS model.
\end{abstract}
\maketitle

{\bf{Key words.}} $p$-adic number, $p$-adic SOS model, Cayley tree,
$p$-adic Gibbs measure.

\section{introduction}

The $p$-adic numbers were first introduced by the German
mathematician K.Hensel. They form an integral part of number
theory, algebraic geometry,
representation theory and other branches of modern mathematics.
However, numerous applications of these numbers to
theoretical physics have been proposed \cite{14}
to quantum mechanics and to $p$-adic valued physical
observable \cite{k91}. A number of $p$-adic models in physics
cannot be described using ordinary probability theory based on the
Kolmogorov axioms.

In \cite{33} a theory of stochastic processes with values in
$p$-adic and more general non-Archimedean fields was developed,
having probability distributions with non-Archimedean values.

One of the basic branches of mathematics lying at the base of the
theory of statistical mechanics is the theory of probability and
stochastic processes. Since the theories of probability and
stochastic processes in a non-Archimedean setting have been
introduced, it is natural to study problems of statistical
mechanics in the context of the $p$-adic theory of probability.

We note that $p$-adic Gibbs measures were studied for several
$p$-adic
models of statistical mechanics \cite{GRR,16,TMP2},
\cite{MRasos,MNGM,rh,bookroz}.

The SOS model can be considered as a generalization of the Ising model
(which arises when $m=1$). It is known that a phase transition does not occur
for the $p$-adic Ising model on a Cayley tree. We prove that there is no phase transition
for the $m+1$-state $p$-adic SOS model if $p\nmid m+1$. Moreover, we show that if
$p\mid m+1$ then a phase transition may occur for this model.

The organization of this paper as follows. Section 2 is
a mathematically preliminary. In section 3 we give a construction
of $p$-adic Gibbs measures for the $m+1$-state $p$-adic SOS model
on a Cayley tree of order $k\geq1$. Moreover, we study
a problem of boundedness of such measures. In section 4 we study
the set of all translation-invariant $p$-adic Gibbs measures for
$3$-state $p$-adic SOS model. In section 5 we give a criterion of uniqueness
of the $p$-adic Gibbs measure.

\section{definitions and preliminary results}

\subsection{\bf $p$-adic numbers and measures.}

Let $\mathbb Q$ be the field of rational numbers.
For a fixed prime number $p$, every rational number $x\ne0$ can
be represented in the form $x = p^r{n\over m}$, where $r,n\in\mathbb Z$,
$m$ is a positive integer, and $n$ and $m$ are relatively
prime with $p$: $(p, n) = 1$, $(p, m) = 1$. The $p$-adic norm of
$x$ is given by
$$
|x|_p= \left\{\begin{array}{ll}
p^{-r},& \text{ if }  x\neq 0,\\
0,& \text{ if } x=0.
 \end{array}\right.
$$

This norm is non-Archimedean  and satisfies the so called strong
triangle inequality
$$|x+y|_p\leq \max\{|x|_p,|y|_p\}.$$

From this property immediately follow the following facts ({\it non-Archimedean norm's property}):

1) if  $|x|_p\neq |y|_p$, then $|x-y|_p=\max\{|x|_p,|y|_p\}$;

2) if  $|x|_p=|y|_p$, then  $|x-y|_p\leq |x|_p$;

The completion of $\mathbb Q$ with respect to the $p$-adic norm defines
the $p$-adic field $\mathbb Q_p$ (see \cite{29}).

The completion of the field of rational numbers $\mathbb Q$ is either
the field of real numbers $\mathbb R$ or one of the fields of
$p$-adic numbers $\mathbb Q_p$ (Ostrowski's theorem).

Any $p$-adic number $x\ne 0$ can be uniquely represented
in the canonical form
\begin{equation}\label{ek}
x = p^{\gamma(x)}(x_0+x_1p+x_2p^2+\dots),
\end{equation}
where $\gamma=\gamma(x)\in \mathbb Z$ and the integers $x_j$ satisfy: $x_0 > 0$,
$0\leq x_j \leq p - 1$ (see \cite{29,sc,48}). In this case $|x|_p =
p^{-\gamma(x)}$.
\begin{thm}\label{tx2}\cite{48}
The equation
$x^2 = a$, $0\ne a =p^{\gamma(a)}(a_0 + a_1p + ...), 0\leq a_j
\leq p - 1$, $a_0 > 0$ has a solution $x\in \mathbb Q_p$ iff
hold true the following:\\
$(a)$ $\gamma(a)$ is even;\\
$(b)$ $y^2\equiv a_0(\operatorname{mod} p)$ is solvable for $p\ne 2$;
the equality $a_1=a_2=0$ holds if $p=2$.
\end{thm}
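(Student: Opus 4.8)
The plan is to reduce everything to the question of when a $p$-adic \emph{unit} is a square, and then to apply Hensel's lemma, handling $p\neq 2$ and $p=2$ separately. First I would split off the power of $p$: write $a=p^{\gamma(a)}a^*$ where $a^*=a_0+a_1p+\cdots$ is a unit with $|a^*|_p=1$. If $x^2=a$ has a solution, comparing norms gives $|x|_p^2=|a|_p$, hence $\gamma(a)=2\gamma(x)$ is even, which proves the necessity of $(a)$. Conversely, assuming $(a)$ and writing $\gamma(a)=2s$, the substitution $x=p^sy$ turns $x^2=a$ into $y^2=a^*$. So it suffices to decide when the unit $a^*$ is a square.

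For the odd case $p\neq 2$ I would set $f(y)=y^2-a^*$ and reduce modulo $p$. Any solution forces $y_0^2\equiv a_0\pmod p$ for its leading digit $y_0$, giving necessity of $(b)$. For sufficiency, if $y_0^2\equiv a_0\pmod p$ is solvable then $y_0\not\equiv0\pmod p$ because $a_0>0$; since $p$ is odd this yields $|f'(y_0)|_p=|2y_0|_p=1$, so the standard Hensel's lemma lifts $y_0$ to an exact root $y\in\mathbb Z_p$, and then $x=p^sy$ solves the original equation.

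The case $p=2$ is where the real difficulty lies, since now $f'(y)=2y$ is never a unit and the elementary Hensel criterion is unavailable. Here $a_0=1$ automatically. For necessity I would argue directly: every $2$-adic unit has the form $y=1+2t$, so $y^2=1+4t(t+1)$, and since $t(t+1)$ is even in $\mathbb Z_2$ we get $y^2\equiv1\pmod 8$; a square unit therefore satisfies $a^*\equiv1\pmod 8$, which for $2$-adic digits is exactly the condition $a_1=a_2=0$. For sufficiency, assuming $a_1=a_2=0$ I would invoke the strong form of Hensel's lemma at $y_0=1$: one computes $|f(1)|_2=|1-a^*|_2\le 2^{-3}<2^{-2}=|f'(1)|_2^2$, and this inequality guarantees a root $y\in\mathbb Z_2$. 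The overall structure is routine apart from this $2$-adic step, where the failure of the naive lifting criterion forces the refined Hensel's lemma (or an equivalent successive-approximation argument).
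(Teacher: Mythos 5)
The paper does not prove this statement at all: it is quoted as a known result from the cited book of Vladimirov--Volovich--Zelenov, so there is no internal proof to compare against. Your argument is correct and complete: the norm comparison gives necessity of $(a)$, the substitution $x=p^s y$ correctly reduces to the unit case, the odd-$p$ case is a standard application of Hensel lifting (the unit hypothesis $a_0>0$ indeed guarantees $|2y_0|_p=1$), and your $p=2$ analysis is sound --- the identity $y^2=1+4t(t+1)$ with $t(t+1)\in 2\mathbb Z_2$ gives necessity of $a_1=a_2=0$, while sufficiency follows from the strong form of Hensel's lemma. It is worth noting that the refined Hensel statement you need at $p=2$ is exactly Lemma~\ref{lem1} of this paper with $i=1$ applied to $f(y)=y^2-a^*$ at $y_0=1$ (one checks $f(1)\equiv 0\ (\operatorname{mod}\ 2^3)$, $f'(1)\equiv 0\ (\operatorname{mod}\ 2)$, $f'(1)\not\equiv 0\ (\operatorname{mod}\ 4)$), so your proof is self-contained relative to the tools the paper already states.
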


We respectively denote the sets of all {\it $p$-adic integers}
and {\it units} of $\mathbb Q_p$ by
$$
\mathbb Z_p=\left\{x\in\mathbb Q_p: |x|_p\leq1\right\},\quad
\mathbb Z^*_p=\left\{x\in\mathbb Q_p: |x|_p=1\right\}.
$$
\begin{lemma}\label{lem1}(Hensel's lemma, \cite{48}). Let $f(x)$ be a polynomial
whose the coefficients are $p$-adic integers. Let $a_0$ be a $p$-adic
integer such that for some $i\geq0$ we have
$$\begin{array}{ll}
f(a_0)\equiv0(\operatorname{mod }p^{2i+1}),\\
f^\prime(a_0)\equiv0(\operatorname{mod }p^{i}),\quad f^\prime(a_0)\not\equiv0(\operatorname{mod }p^{i+1}).
\end{array}
$$
Then $f(x)$ has a unique $p$-adic integer root $x_0$ which satisfies
$x_0\equiv a_0(\operatorname{mod }p^{i+1})$.
\end{lemma}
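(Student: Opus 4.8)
The plan is to prove this by the $p$-adic Newton iteration (successive approximation). First I would restate the hypotheses as norm estimates. The conditions $f'(a_0)\equiv 0\pmod{p^{i}}$ and $f'(a_0)\not\equiv 0\pmod{p^{i+1}}$ say exactly that $|f'(a_0)|_p=p^{-i}$, while $f(a_0)\equiv 0\pmod{p^{2i+1}}$ gives $|f(a_0)|_p\leq p^{-(2i+1)}$. Hence $|f(a_0)|_p<|f'(a_0)|_p^2$, the classical Hensel inequality, which is the engine of the whole argument. I would also set up the only piece of calculus needed: since $f$ has $p$-adic integer coefficients, for $a\in\mathbb Z_p$ one has expansions $f(a+h)=f(a)+f'(a)h+\sum_{j\geq2}c_j(a)h^j$ and $f'(a+h)=f'(a)+\sum_{j\geq1}d_j(a)h^j$, where every $c_j(a),d_j(a)\in\mathbb Z_p$, being $\mathbb Z_p$-linear combinations (with binomial-coefficient weights) of the coefficients of $f$.

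Next I would define $a_{n+1}=a_n-f(a_n)/f'(a_n)$ and prove by induction on $n$ that: (i) $a_n\in\mathbb Z_p$ and $a_n\equiv a_0\pmod{p^{i+1}}$; (ii) $|f'(a_n)|_p=p^{-i}$; and (iii) $|f(a_n)|_p\leq p^{-(2i+1)}$ with $|f(a_n)|_p\to0$. For the inductive step, write $h_n=-f(a_n)/f'(a_n)$; property (ii) gives $|h_n|_p=|f(a_n)|_p\,p^{i}\leq p^{-(i+1)}<1$, so $a_{n+1}\in\mathbb Z_p$ and $a_{n+1}\equiv a_n\pmod{p^{i+1}}$. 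Substituting $h_n$ into the expansion of $f$ cancels the linear term and leaves $f(a_{n+1})=\sum_{j\geq2}c_j(a_n)h_n^{j}$, whence $|f(a_{n+1})|_p\leq|h_n|_p^{2}\leq|f(a_n)|_p^{2}\,p^{2i}$; a quick computation with $t_n:=p^{2i}|f(a_n)|_p$ shows $t_{n+1}\leq t_n^2$ and $t_0\leq p^{-1}$, giving the quadratic decay in (iii). For (ii), the expansion of $f'$ shows $f'(a_{n+1})-f'(a_n)$ has norm $\leq|h_n|_p\leq p^{-(i+1)}<p^{-i}=|f'(a_n)|_p$, so by the non-Archimedean property the valuation is unchanged and (ii) persists.

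Finally I would assemble convergence and uniqueness. Since $|a_{n+1}-a_n|_p=|h_n|_p\to0$, the sequence is Cauchy in the complete field $\mathbb Q_p$, its limit $x_0$ lies in the closed set $\mathbb Z_p$, and continuity of the polynomial gives $f(x_0)=\lim_n f(a_n)=0$. The ultrametric bound $|x_0-a_0|_p\leq\max_n|h_n|_p\leq p^{-(i+1)}$ yields $x_0\equiv a_0\pmod{p^{i+1}}$. For uniqueness, suppose $x_0'$ is another root with $x_0'\equiv a_0\pmod{p^{i+1}}$. Expanding $f$ around $x_0'$ and using $f(x_0)=f(x_0')=0$ gives $0=(x_0-x_0')\bigl(f'(x_0')+(x_0-x_0')c\bigr)$ with $c\in\mathbb Z_p$; since the congruence forces $|f'(x_0')|_p=p^{-i}$ (same valuation argument as in (ii)) while $|(x_0-x_0')c|_p\leq p^{-(i+1)}<p^{-i}$, the bracket has norm $p^{-i}\neq0$, so $x_0=x_0'$.

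The step I expect to be the main obstacle is the inductive preservation of the \emph{exact} valuation $|f'(a_n)|_p=p^{-i}$ at every stage, since this is precisely what keeps each Newton correction $h_n$ inside $\mathbb Z_p$ and drives the quadratic convergence of $|f(a_n)|_p$ to zero; once that is secured, the remaining estimates are routine bookkeeping with the ultrametric inequality.
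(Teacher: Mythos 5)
Your proposal is correct and complete, but note that the paper itself contains no proof of this statement: Lemma~\ref{lem1} is quoted as Hensel's lemma from the monograph \cite{48} and is used as a black box (it is invoked in Propositions~\ref{pro5} and~\ref{pro6} to produce roots of the polynomial $g(x)=\theta x^{k+1}-x^k+(\theta^2+1)x-2\theta$). So there is nothing in the paper to compare against; what you have supplied is the standard Newton-iteration proof, and every step of it checks out. In particular: the congruence hypotheses do translate exactly into $|f'(a_0)|_p=p^{-i}$ and $|f(a_0)|_p\leq p^{-(2i+1)}<|f'(a_0)|_p^2$; the Taylor coefficients $c_j(a)$, $d_j(a)$ lie in $\mathbb Z_p$ because they are binomial-coefficient combinations of the coefficients of $f$ evaluated at $a\in\mathbb Z_p$; the inductive preservation of the exact valuation $|f'(a_n)|_p=p^{-i}$ follows, as you say, from $|f'(a_{n+1})-f'(a_n)|_p\leq|h_n|_p\leq p^{-(i+1)}<p^{-i}$ together with property (1) of the non-Archimedean norm stated in Section~2; the bookkeeping $t_{n+1}\leq t_n^2$ with $t_0\leq p^{-1}$ does force $|f(a_n)|_p\to0$ quadratically, so the iteration is Cauchy and converges in the complete ring $\mathbb Z_p$; and your uniqueness argument establishes precisely the uniqueness asserted in the lemma, namely uniqueness among $p$-adic integer roots congruent to $a_0$ modulo $p^{i+1}$ (not among all roots in $\mathbb Z_p$), using the same valuation-preservation device to get $|f'(x_0')|_p=p^{-i}$ and hence a nonzero bracket. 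The one obstacle you flagged, keeping the exact valuation of $f'(a_n)$ along the iteration, is indeed the crux, and your treatment of it is the right one; a self-contained proof along these lines would be a reasonable addition to the paper, since its phase-transition results for $p=3$ rest on this lemma.
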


For $a\in \mathbb Q_p$ and $r> 0$ we denote
$$B(a, r) = \{x\in \mathbb Q_p : |x-a|_p < r\}.$$

$p$-adic {\it logarithm} is defined by the series
$$\log_p(x) =\log_p(1 + (x-1)) =
\sum_{n=1}^\infty (-1)^{n+1}{(x-1)^n\over n},$$ which converges
for  $x\in B(1, 1)$;  $p$-adic {\it exponential} is defined by
$$\exp_p(x) =\sum^\infty_{n=0}{x^n\over n!},$$
which converges
for $x \in B(0, p^{-1/(p-1)})$.
\begin{lemma}\label{lem2} Let $x\in B(0, p^{-1/(p-1)})$. Then
$$|\exp_p(x)|_p = 1,\ \ |\exp_p(x)-1|_p = |x|_p, \ \
|\log_p(1 + x)|_p = |x|_p,$$
$$\log_p(\exp_p(x)) = x,\ \ \exp_p(\log_p(1 + x)) = 1 + x.$$
\end{lemma}
Denote
$$
\mathcal E_p=\left\{x\in\mathbb Q_p: |x|_p=1, |x-1|_p<p^{-1/(p-1)}\right\}.
$$

A more detailed description of $p$-adic calculus and $p$-adic
mathematical physics can be found in \cite{29,sc,48}.

Let $(X,{\mathcal B})$ be a measurable space, where ${\mathcal B}$
is an algebra of subsets of $X$. A function $\mu: {\mathcal B}\to
\mathbb Q_p$ is said to be a $p$-adic measure if for any $A_1, . . .
,A_n\in {\mathcal B}$ such that $A_i\cap A_j = \emptyset$, $i\ne
j$, the following holds:
$$\mu(\bigcup^n_{j=1}A_j)=\sum^n_{j=1}\mu(A_j).$$
A $p$-adic measure is called a probability measure if $\mu(X) =
1$. A $p$-adic probability measure $\mu$ is called {\it bounded}
if $\sup\{|\mu(A)|_p:A\in\mathcal B\}<\infty$ (see \cite{k91}).

We call a $p$-adic measure a probability measure \cite{16} if $\mu(X)=1$.

\subsection{\bf Cayley tree.}

The Cayley tree $\Gamma^k$
of order $ k\geq 1 $ is an infinite tree, i.e., a graph without
cycles, such that exactly $k+1$ edges originate from each vertex.
Let $\Gamma^k=(V, L)$ where $V$ is the set of vertices and  $L$
the set of edges.
Two vertices $x$ and $y$ are called {\it nearest neighbors} if
there exists an
edge $l \in L$ connecting them.
We shall use the notation $l=\langle x,y\rangle$.
A collection of nearest neighbor pairs $\langle x,x_1\rangle,
\langle x_1,x_2\rangle,...,\langle x_{d-1},y\rangle$ is called a
{\it
path} from $x$ to $y$. The distance $d(x,y)$ on the Cayley tree
is the number of edges of the shortest path from $x$ to $y$.

For a fixed $x^0\in V$, called the root, we set
\begin{equation*}
W_n=\{x\in V\,| \, d(x,x^0)=n\}, \qquad V_n=\bigcup_{m=0}^n W_m
\end{equation*}
and denote
$$
S(x)=\{y\in W_{n+1} :  d(x,y)=1 \}, \ \ x\in W_n, $$ the set  of
{\it direct successors} of $x$.

Let $G_k$ be a free product of $k + 1$ cyclic groups of the second
order with generators $a_1, a_2,\dots, a_{k+1}$,
respectively.
It is known that there exists a one-to-one correspondence between
the set of vertices $V$ of the Cayley tree $\Gamma^k$ and the group $G_k$.

\subsection{\bf $p$-adic SOS model}

Let $\mathbb Q_p$ be the field of $p$-adic numbers and
$\Phi$ be a finite set. A configuration $\sigma$ on $V$ is then defined
as a function $x\in V\mapsto\sigma(x)\in\Phi$; in a similar fashion
one defines a configuration $\sigma_n$ and $\sigma^{(n)}$ on $V_n$
and $W_n$ respectively. The set of all configurations on $V$
(resp. $V_n,\ W_n$) coincides with $\Omega=\Phi^V$
(resp.$\Omega_{V_n}=\Phi^{V_n},\ \Omega_{W_n}=\Phi^{W_n}$). Using
this, for given configurations $\sigma_{n-1}\in\Omega_{V_{n-1}}$
and $\omega\in\Omega_{W_n}$ we define their concatenations
by
$$
(\sigma_{n-1}\vee\omega)(x)=\left\{\begin{array}{ll}
\sigma_{n-1}(x),& \text{if}\  x\in V_{n-1},\\
\omega(x),& \text{if}\  x\in W_n.
 \end{array}\right.
$$
It is clear that $\sigma_{n-1}\vee\omega\in\Omega_{V_n}.$\\

Let $G^*_k$ be a subgroup of the group $G_k$. A function $h_x$ (for example,
a configuration $\sigma(x)$) of $x\in G_k$ is called $G^*_k$-periodic if
$h_{yx}=h_x$ (resp. $\sigma(yx)=\sigma(x)$) for any $x\in G_k$ and
$y\in G^*_k.$

A $G_k$-periodic function is called {\it translation-invariant}.

We consider {\it $p$-adic SOS model} on a Cayley tree,
where the spin takes values in the set
$\Phi:=\{1,2,\dots,m\}$, and is assigned to the vertices
of the tree.

The (formal) Hamiltonian is of a $p$-adic SOS form:
\begin{equation}\label{ham}
H(\sigma)=J\sum_{\langle x,y\rangle\in L}
\left|\sigma(x)-\sigma(y)\right|_\infty,
\end{equation}
where $J\in B(0, p^{-1/(p-1)})$ is a coupling constant,
$\langle x,y\rangle$ stands for nearest neighbor vertices
and $|\cdot|_\infty$ stands for usual absolute value.

The $p$-adic SOS model of this type can be considered as a generalization
of the $p$-adic Ising model (which arises when $m=1$).

\section{The system of $p$-adic vector-valued functional equations}
We use standard definition of a $p$-adic Gibbs measure, a translation-invariant
(TI) measure. Also, call measure $\mu$ {\it symmetric} if it is preserved
under simultaneous change $j\mapsto m-j$ at each vertex $x\in V$.

Let $z: x\mapsto z_x=(z_{0,x}, z_{1,x},\dots, z_{m,x})\in\mathcal E^{m+1}_p$
be a $p$-adic vector-valued function of $x\in V\setminus\{x_0\}$. Given
$n=1,2,\dots,$ consider the $p$-adic probability distribution $\mu^{(n)}$ on $\Omega_{V_n}$
defined by
\begin{equation}\label{mu_n}
\mu_{\tilde z}^{(n)}(\sigma_n)=Z_{n,\tilde z}^{-1}\exp_p\left\{H_n(\sigma_n)\right\}
\prod_{x\in W_n}{\tilde z}_{\sigma(x),x}.
\end{equation}
Here, $\sigma_n: x\in V_n\mapsto\sigma_n(x)$ and $Z_{n,\tilde z}$ is the corresponding
partition function:

\begin{equation}\label{Z_n}
Z_{n,\tilde z}=\sum_{\sigma\in\Omega_{V_n}}\exp_p\left\{H_n(\sigma_n)\right\}
\prod_{x\in W_n}{\tilde z}_{\sigma(x),x}.
\end{equation}

We say that the $p$-adic probability distributions (\ref{mu_n}) are
compatible if for all
$n\geq 1$ and $\sigma_{n-1}\in \Phi^{V_{n-1}}$:
\begin{equation}\label{cc}
\sum_{\omega_n\in\Omega_{W_n}}\mu_{\tilde z}^{(n)}(\sigma_{n-1}\vee \omega_n)
=\mu_{\tilde z}^{(n-1)}(\sigma_{n-1}).
\end{equation}
Here $\sigma_{n-1}\vee \omega_n$ is the concatenation of the configurations.\\
We note that an analog of the Kolmogorov extension theorem for distributions
can be proved for $p$-adic distributions given by (\ref{mu_n}) (see \cite{16}).
According to this theorem there exists a unique $p$-adic measure
$\mu_{\tilde z}$ on $\Omega$ such that,
for all $n$ and $\sigma_n\in \Omega_{V_n}$,
$$\mu_{\tilde z}(\{\sigma|_{V_n}=\sigma_n\})=\mu_{\tilde z}^{(n)}(\sigma_n).$$

Such a measure is called a {\it $p$-adic Gibbs measure} (pGM)
corresponding to the Hamiltonian (\ref{ham}) and vector-valued
function ${\tilde z}_x, x\in V$. We denote by $G(H)$ the set of all $p$-adic Gibbs
measures for the hamiltonian $H$. If $|G(H)|\geq2$ then we say that for this model
there is {\it a phase transition}.

The following statement describes conditions on ${\tilde z}_x$
guaranteeing compatibility of $\mu_{\tilde z}^{(n)}(\sigma_n)$.

\begin{pro}\label{pro1}
The $p$-adic probability distributions
$\mu^{(n)}_{\tilde z}(\sigma_n)$, $n=1,2,\dots$ in
(\ref{mu_n}) are compatible for $p$-adic SOS model iff for any $x\in V\setminus\{x^0\}$
the following system of equations holds:
\begin{equation}\label{p***}
 z_{i,x}=\prod_{y\in S(x)}\frac{\sum_{j=0}^{m-1}\theta^{|i-j|_\infty}z_{j,y}+\theta^{m-i}}
 {\sum_{j=0}^{m-1}\theta^{m-j}z_{j,y}+1}\ ,\quad i=0,1,\dots,m-1.
\end{equation}
Here $\theta=\exp_p(J)$ and $z_{i,x}=\frac{\tilde{z}_{i,x}}{\tilde{z}_{m,x}},\ i=0,1,\dots,m-1.$
\end{pro}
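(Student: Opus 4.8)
The plan is to substitute the explicit densities (\ref{mu_n})--(\ref{Z_n}) into the compatibility identity (\ref{cc}) and to exploit that on a tree the Hamiltonian (\ref{ham}) splits over the last shell of edges. Writing $\sigma_n=\sigma_{n-1}\vee\omega_n$ and isolating the edges joining $W_{n-1}$ to $W_n$,
$$H_n(\sigma_{n-1}\vee\omega_n)=H_{n-1}(\sigma_{n-1})+J\sum_{x\in W_{n-1}}\sum_{y\in S(x)}|\sigma_{n-1}(x)-\omega_n(y)|_\infty.$$
Since $J\in B(0,p^{-1/(p-1)})$, every argument appearing here lies in the domain of $\exp_p$ (by the ultrametric inequality $|H_m|_p\le|J|_p<p^{-1/(p-1)}$), so Lemma \ref{lem2} gives $\exp_p(a+b)=\exp_p(a)\exp_p(b)$ and hence $\exp_p\{H_n\}=\exp_p\{H_{n-1}(\sigma_{n-1})\}\prod_{x\in W_{n-1}}\prod_{y\in S(x)}\theta^{|\sigma_{n-1}(x)-\omega_n(y)|_\infty}$ with $\theta=\exp_p(J)$. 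First I would use this to rewrite the left-hand side of (\ref{cc}).

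The key step is factorizing the sum over $\omega_n$. Because $W_n=\bigsqcup_{x\in W_{n-1}}S(x)$, the boundary product in (\ref{mu_n}) together with the edge weights decouples vertex by vertex, so
$$\sum_{\omega_n}\prod_{x\in W_n}\tilde z_{\omega_n(x),x}\prod_{x\in W_{n-1}}\prod_{y\in S(x)}\theta^{|\sigma_{n-1}(x)-\omega_n(y)|_\infty}=\prod_{x\in W_{n-1}}\prod_{y\in S(x)}\left(\sum_{t=0}^{m}\theta^{|\sigma_{n-1}(x)-t|_\infty}\tilde z_{t,y}\right),$$
where $t$ ranges over the $m+1$ spin values $0,\dots,m$. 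Substituting, the left side of (\ref{cc}) becomes $Z_{n,\tilde z}^{-1}\exp_p\{H_{n-1}(\sigma_{n-1})\}\prod_{x\in W_{n-1}}\prod_{y\in S(x)}\left(\sum_{t=0}^{m}\theta^{|\sigma_{n-1}(x)-t|_\infty}\tilde z_{t,y}\right)$, while the right side is $Z_{n-1,\tilde z}^{-1}\exp_p\{H_{n-1}(\sigma_{n-1})\}\prod_{x\in W_{n-1}}\tilde z_{\sigma_{n-1}(x),x}$.

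For the ``only if'' direction I would cancel the common factor $\exp_p\{H_{n-1}(\sigma_{n-1})\}$ and compare two boundary configurations that differ only at a single $x\in W_{n-1}$, taking there the values $i$ and $m$. Writing $N_i(y):=\sum_{t=0}^{m}\theta^{|i-t|_\infty}\tilde z_{t,y}$, forming the ratio of the two resulting identities makes the normalizations $Z_{n,\tilde z},Z_{n-1,\tilde z}$ and all factors at vertices $\ne x$ cancel, leaving $\tilde z_{i,x}/\tilde z_{m,x}=\prod_{y\in S(x)}N_i(y)/N_m(y)$; dividing numerator and denominator by $\tilde z_{m,y}$, using $z_{t,y}=\tilde z_{t,y}/\tilde z_{m,y}$ and $|i-m|_\infty=m-i$, yields exactly (\ref{p***}). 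For the converse, assuming (\ref{p***}) in the form $\prod_{y\in S(x)}N_i(y)=z_{i,x}\prod_{y\in S(x)}N_m(y)$ and substituting into the left side of (\ref{cc}) collapses it to $\tfrac{C_nZ_{n-1,\tilde z}}{Z_{n,\tilde z}}\,\mu^{(n-1)}_{\tilde z}(\sigma_{n-1})$, where $C_n=\prod_{x\in W_{n-1}}\tilde z_{m,x}^{-1}\prod_{y\in S(x)}N_m(y)$ is independent of $\sigma_{n-1}$; summing over all $\sigma_{n-1}$ and using that $\mu^{(n)}_{\tilde z}$ and $\mu^{(n-1)}_{\tilde z}$ are probability measures forces $C_nZ_{n-1,\tilde z}/Z_{n,\tilde z}=1$, so (\ref{cc}) holds.

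The main obstacle is the bookkeeping in the factorization step, and in particular the justification that $\exp_p$ of the energy may be split as a product over the last shell of edges; this rests entirely on $J\in B(0,p^{-1/(p-1)})$ via Lemma \ref{lem2} and would fail for general $J$. A secondary point requiring care is that the unknown ratio $Z_{n-1,\tilde z}/Z_{n,\tilde z}$ can only be eliminated by forming ratios at a single vertex, which is precisely why (\ref{p***}) is stated for the normalized variables $z_{i,x}$ rather than for the raw weights $\tilde z_{i,x}$.
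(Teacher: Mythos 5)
Your proof is correct and follows essentially the same route as the paper's: substitute the densities into (\ref{cc}), factor the boundary sum over $W_n=\bigsqcup_{x\in W_{n-1}}S(x)$ to obtain the per-vertex identity (the paper's equation (\ref{p**})) for necessity, and for sufficiency rewrite (\ref{p***}) as a proportionality $\prod_{y\in S(x)}\sum_j\theta^{|i-j|_\infty}\tilde z_{j,y}=a(x)\tilde z_{i,x}$ (your $C_n$ is exactly the paper's $A_{n-1}$) and use the probability normalization to fix the constant. If anything, you spell out two steps the paper leaves implicit, namely the splitting of $\exp_p\{H_n\}$ over the last shell of edges and the single-vertex comparison that eliminates the ratio $Z_{n-1,\tilde z}/Z_{n,\tilde z}$.
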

\begin{proof}
{\it Necessity}. Suppose that (\ref{cc}) holds; we want to prove (\ref{p***}).
Substituting (\ref{mu_n}) in (\ref{cc}), obtain, for any configurations
$\sigma_{n-1}\in\Omega_{V_n}$:
$$
\frac{Z_{n-1,\tilde z}}{Z_{n,\tilde z}}\sum_{\omega\in\Omega_{W_n}}\prod_{x\in W_{n-1}}
\prod_{y\in S(x)}\exp_p\left(J|\sigma_{n-1}(x)-\omega(y)|_\infty\right)\tilde{z}_{\omega(y),y}=
\prod_{x\in W_{n-1}}\tilde{z}_{\sigma_{n-1}(x),x}.
$$
From this for any $i\in\{0,1,\dots,m\}$ we get
\begin{equation}\label{p**}
\prod_{y\in S(x)}\frac{\sum_{j=0}^{m}\exp_p\left(J|i-j|_\infty\right)\tilde{z}_{j,y}}
 {\sum_{j=0}^{m}\exp_p\left(J(m-j)\right)\tilde{z}_{j,y}}=\frac{\tilde{z}_{i,x}}{\tilde{z}_{m,x}}.
\end{equation}
Denoting $\theta=\exp_p(J)$ and $z_{i,x}=\frac{\tilde{z}_{i,x}}{\tilde{z}_{m,x}}$, we get
(\ref{p***}) from (\ref{p**}).

{\it Sufficiency}. Suppose that (\ref{p***}) holds. It is equivalent to the
representations
\begin{equation}\label{p1*}
\prod_{y\in S(x)}\sum_{j=0}^m\theta^{|i-j|_\infty}\tilde{z}_{j,y}=a(x)\tilde{z}_{i,x},
\quad i=0,1,\dots,m-1
\end{equation}
for some function $a(x)$. We have

\begin{equation}\label{p1**}
\mbox{LHS of }(\ref{cc})=\frac{1}{Z_{n,\tilde{z}}}\exp_p\left(H(\sigma_{n-1})\right)
\prod_{x\in W_{n-1}}\prod_{y\in S(x)}\sum_{j=0}^m\theta^{|i-j|_\infty}\tilde{z}_{i,y}.
\end{equation}

Substituting (\ref{p1*}) into (\ref{p1**}) and denoting $A_n(x)=\prod_{y\in W_{n-1}}a(y)$,
we get
\begin{equation}\label{p1***}
\mbox{RHS of }(\ref{p1**})=\frac{A_{n-1}}{Z_{n,\tilde{z}}}\exp_p\left(H(\sigma_{n-1})\right)
\prod_{x\in W_{n-1}}\tilde{z}_{\sigma_{n-1}(x),x}.
\end{equation}

Since $\mu^{(n)}_{\tilde{z}},\ n\geq1$ is a probability, we should have
$$
\sum_{\sigma_{n-1}\in\Omega_{V_{n-1}}}\sum_{\omega\in\Omega_{W_n}}\mu^{(n)}_
{\tilde{z}}(\sigma_{n-1},\omega)=1.
$$

Hence from (\ref{p1***}) we get $Z_{n,\tilde{z}}=A_{n-1}Z_{n-1,\tilde{z}}$,
and (\ref{cc}) holds.
\end{proof}

The following proposition is straightforward.
\begin{pro}\label{pro2} Let $H$ be a hamiltonian of $m+1$-state $p$-adic
SOS model on a Cayley tree $\Gamma^k$. Then it hold the following
statements:

$1)$ Any measure $\mu$ with local distributions $\mu^{(n)}$ satisfying
(\ref{mu_n}),(\ref{cc}) belongs to $G(H)$.

$2)$ An $\mu\in G(H)$ is TI iff $z_{i,x}$ does not depend on $x: z_{i,x}\equiv z_i,\ x\in V,
\ i\in\Phi$,
and symmetric TI if and only if $z_i=z_{m-i},\ i\in\Phi$.
\end{pro}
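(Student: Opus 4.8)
The plan is to treat the three assertions in turn, exploiting throughout that the Hamiltonian (\ref{ham}) is invariant both under the tree automorphisms induced by the $G_k$-action and under the spin flip $j\mapsto m-j$, so that any failure of invariance must be carried by the boundary functions alone. Statement $1)$ requires only unwinding definitions: if $\mu$ has local distributions of the form (\ref{mu_n}) satisfying the compatibility relations (\ref{cc}), then the $p$-adic Kolmogorov extension theorem quoted after (\ref{cc}) produces a unique $p$-adic measure on $\Omega$ restricting to the $\mu^{(n)}$, and by the definition of a $p$-adic Gibbs measure recorded there this measure lies in $G(H)$; I would simply state this.

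The one fact worth isolating for statement $2)$ is the injectivity, after normalization, of the correspondence $\tilde z\mapsto\mu_{\tilde z}$. Precisely, two boundary functions yield the same family $\mu^{(n)}$ if and only if at every vertex $x$ the vectors $\tilde z_x$ agree up to a nonzero scalar, i.e. the normalized ratios $z_{i,x}=\tilde z_{i,x}/\tilde z_{m,x}$ coincide. This is seen by reading (\ref{mu_n}) at a single boundary vertex $x\in W_n$: fixing the configuration on $V_{n-1}$ and on $W_n\setminus\{x\}$ and varying only $\sigma(x)$, the ratio of $\mu^{(n)}$ at $\sigma(x)=i$ to its value at $\sigma(x)=m$ equals $z_{i,x}$ exactly, the partition function and all other factors cancelling. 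Thus $z_{i,x}$ is recovered from $\mu$ itself, which gives both determinacy and injectivity.

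Granting this, the translation-invariance equivalence is immediate. If $z_{i,x}\equiv z_i$, then substituting $gx$ for $x$ in (\ref{mu_n}) and (\ref{Z_n}) merely permutes the factors without changing them, since $H$ is $G_k$-invariant and the constant weights are unaffected; hence $\mu$ is $G_k$-periodic, i.e. TI. Conversely, if $\mu$ is TI then for each $g\in G_k$ the shifted measure $\tau_g\mu$, whose boundary function is $x\mapsto z_{i,g^{-1}x}$, coincides with $\mu$; by the injectivity above the normalized boundary functions agree, so $z_{i,g^{-1}x}=z_{i,x}$ for all $g$, and transitivity of the $G_k$-action on $V$ forces $z_{i,x}$ to be independent of $x$.

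For the symmetric case I would run the same scheme with the involution $F\colon\sigma(x)\mapsto m-\sigma(x)$ replacing the group action. Because $|{(m-a)}-(m-b)|_\infty=|a-b|_\infty$, the Hamiltonian is $F$-invariant, so $\mu\circ F$ is again of the form (\ref{mu_n}) but with boundary vectors $\tilde z_{m-j,x}$ in place of $\tilde z_{j,x}$. Symmetry of $\mu$ together with injectivity then gives $\tilde z_{m-j,x}=c(x)\tilde z_{j,x}$ for some scalar $c(x)$; comparing the cases $j=0$ and $j=m$ yields $c(x)^2=1$, and since every $z_{i,x}\in\mathcal E_p$ is $p$-adically close to $1$ we must have $c(x)=1$, whence $z_{m-i,x}=z_{i,x}$, which in the TI case is $z_i=z_{m-i}$. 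The only genuinely non-formal point—and thus the main obstacle—is the injectivity step of the second paragraph: one must verify that the boundary function, visible in $\mu$ only through the vertices $W_n$, is correctly read off by the single-site ratio and that the partition-function and neighbouring factors indeed cancel; everything else is a direct substitution.
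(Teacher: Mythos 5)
The paper offers no proof of Proposition \ref{pro2} at all --- it is simply declared ``straightforward'' --- so your proposal is filling a gap rather than paralleling a written argument. Its architecture (recover the boundary field from the measure, then use equivariance of $\tilde z\mapsto\mu_{\tilde z}$ under the $G_k$-action and under the spin flip) is indeed the standard way to make the claim precise, and your part 1) and the algebra in the symmetric case are fine. However, the step you yourself single out as the crux is verified incorrectly. Fix $\sigma$ on $V_{n-1}$ and on $W_n\setminus\{x\}$ and vary $\sigma(x)$: the Hamiltonian $H_n$ contains the edge term $J|\sigma(x)-\sigma(y_x)|_\infty$, where $y_x\in W_{n-1}$ is the parent of $x$, and this term changes with $\sigma(x)$, so it does \emph{not} cancel in the ratio. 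The single-site ratio actually equals $\theta^{|i-a|_\infty-(m-a)}\,z_{i,x}$ with $a=\sigma(y_x)$, not $z_{i,x}$. The error happens to be harmless for your injectivity claim, because in any identity $\mu_{\tilde z'}=\mu_{\tilde z}$ both sides are evaluated at the same configurations, so the correction factor $\theta^{|i-a|_\infty-(m-a)}$ is identical on the two sides and cancels, and $z_{i,x}$ is still determined by $\mu$; but as written, the assertion that ``the partition function and all other factors cancel exactly'' is false and must be repaired along these lines.

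The second soft spot is a genuine gap in the translation-invariance equivalence. The claim that substituting $gx$ for $x$ in (\ref{mu_n}) ``merely permutes the factors'' ignores that $V_n$, $W_n$, and the whole construction of $\mu_{\tilde z}$ are tied to the root $x^0$, which $g$ does not fix: the image under $g$ of the cylinder $\{\sigma|_{V_n}=\sigma_n\}$ is a cylinder over the ball $gV_n$ centred at $gx^0$, which is not any $V_m$. Consequently neither direction of your third paragraph --- neither ``$z$ constant $\Rightarrow\mu$ TI'', nor the identification of the boundary function of the shifted measure as $x\mapsto z_{i,g^{-1}x}$ --- follows by mere relabelling of factors; both require the re-rooting (Markov/splitting) property of $\mu_{\tilde z}$: compatibility (\ref{cc}) implies that $\mu_{\tilde z}$ has local distributions of the Gibbs form (\ref{mu_n}) over balls centred at \emph{any} vertex, with the same field $\tilde z$. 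This property is true, and its proof is a finite computation with (\ref{mu_n}) and (\ref{cc}), but it is precisely the content that the paper's word ``straightforward'' hides, and a complete write-up of part 2) must include it.
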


\subsection{Boundedness of $p$-adic Gibbs measure}
\begin{thm}
Let $H$ be a $m+1$-state $p$-adic SOS model on a Cayley tree of order $k$.
Then a measure $\mu\in G(H)$ is bounded if and only if $m+1$ is not
divisible by $p$.
\end{thm}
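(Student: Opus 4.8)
The plan is to reduce boundedness to the growth of the partition functions $Z_{n,\tilde z}$ and then to read off $|Z_{n,\tilde z}|_p$ from a single congruence modulo $p$. Since $\mu$ is determined by its values on the cylinders $\{\sigma|_{V_n}=\sigma_n\}$, on which $\mu(\{\sigma|_{V_n}=\sigma_n\})=\mu^{(n)}_{\tilde z}(\sigma_n)$, and since every element of the underlying algebra is a finite disjoint union of level-$n$ cylinders (refined to a common $n$), the strong triangle inequality gives $\sup_A|\mu(A)|_p=\sup_n|Z_{n,\tilde z}|_p^{-1}$; hence $\mu$ is bounded iff $\inf_n|Z_{n,\tilde z}|_p>0$. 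To evaluate $|\mu^{(n)}_{\tilde z}(\sigma_n)|_p$ I would invoke Lemma \ref{lem2}: because $J\in B(0,p^{-1/(p-1)})$ and $|\sigma(x)-\sigma(y)|_\infty\in\mathbb Z$, we have $H_n(\sigma_n)\in B(0,p^{-1/(p-1)})$, so $|\exp_p(H_n(\sigma_n))|_p=1$, and each $\tilde z_{i,x}\in\mathcal E_p$ satisfies $|\tilde z_{i,x}|_p=1$. Thus the numerator in (\ref{mu_n}) has norm $1$ and $|\mu^{(n)}_{\tilde z}(\sigma_n)|_p=|Z_{n,\tilde z}|_p^{-1}$, which confirms the reduction.

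Next I would use the recursion already extracted in the proof of Proposition \ref{pro1}. Taking $i=m$ in (\ref{p1*}) gives $a(x)=\tilde z_{m,x}^{-1}\prod_{y\in S(x)}\sum_{j=0}^m\theta^{m-j}\tilde z_{j,y}$, together with $Z_{n,\tilde z}=A_{n-1}Z_{n-1,\tilde z}$ where $A_{n-1}=\prod_{y\in W_{n-1}}a(y)$. Since $|\tilde z_{m,x}|_p=1$, this yields $|a(x)|_p=\prod_{y\in S(x)}\bigl|\sum_{j=0}^m\theta^{m-j}\tilde z_{j,y}\bigr|_p$, so the whole problem collapses to the single quantity $\bigl|\sum_{j=0}^m\theta^{m-j}\tilde z_{j,y}\bigr|_p$.

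The crux is a congruence modulo $p$. As $\theta=\exp_p(J)\in\mathcal E_p$ and $\tilde z_{j,y}\in\mathcal E_p$, both are $\equiv 1\pmod p$, hence $\theta^{m-j}\tilde z_{j,y}\equiv 1\pmod p$ for each $j$, and summing the $m+1$ terms gives $\sum_{j=0}^m\theta^{m-j}\tilde z_{j,y}\equiv m+1\pmod p$. If $p\nmid m+1$, this sum is a unit, so $\bigl|\sum_{j}\theta^{m-j}\tilde z_{j,y}\bigr|_p=1$, whence $|a(x)|_p=1$, $|A_{n-1}|_p=1$, and $|Z_{n,\tilde z}|_p$ is constant in $n$; thus $\inf_n|Z_{n,\tilde z}|_p>0$ and $\mu$ is bounded. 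If $p\mid m+1$, the sum has norm $\le p^{-1}<1$, so $|a(y)|_p\le p^{-k}$ (each $y$ has $k$ successors), giving $|A_{n-1}|_p\le p^{-k|W_{n-1}|}$; since $|W_{n-1}|\to\infty$ we obtain $|Z_{n,\tilde z}|_p\to 0$, hence $\sup_n|Z_{n,\tilde z}|_p^{-1}=\infty$ and $\mu$ is unbounded.

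I expect the main obstacle to be the first, structural step: justifying rigorously that $\sup_A|\mu(A)|_p=\sup_n|Z_{n,\tilde z}|_p^{-1}$, i.e. that the supremum over the whole algebra is attained on full level-$n$ cylinders. This rests on writing arbitrary algebra elements as finite disjoint unions of cylinders at a common level and applying the strong triangle inequality; once this is in place, the reduction to $|Z_{n,\tilde z}|_p$ and the congruence computation are short, the latter following immediately from the memberships $\theta,\tilde z_{j,y}\in\mathcal E_p$.
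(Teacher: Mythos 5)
Your proposal is correct and follows essentially the same route as the paper: both rest on the recursion $Z_{n,\tilde z}=A_{n-1}Z_{n-1,\tilde z}$ together with the congruence $\sum_{j=0}^{m}\theta^{m-j}\tilde z_{j,y}\equiv m+1\ (\operatorname{mod}\ p)$ (the paper writes this as $\sum_{j}(\theta^{m-j}z_{j,y}-1)+m+1$ and applies the non-Archimedean norm property), yielding $|\mu^{(n)}_{\tilde z}(\sigma_n)|_p=|Z_{n,\tilde z}|_p^{-1}$, which is $1$ when $p\nmid m+1$ and grows without bound when $p\mid m+1$. The only difference is that you spell out the reduction $\sup_{A}|\mu(A)|_p=\sup_n|Z_{n,\tilde z}|_p^{-1}$ via refinement to common-level cylinders and the strong triangle inequality, a step the paper leaves implicit.
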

\begin{proof}
Let $z_x=(z_{0,x},z_{1,x},\dots,z_{m-1,x})$ is a solution to (\ref{p***}).
Then from (\ref{p1*}) for all $x\in V\setminus\{x_0\}$
we find
$$
a(x)=\prod_{y\in S(x)}\left(\sum_{j=0}^{m-1}\theta^{m-j}z_{j,y}+1\right)=
\prod_{y\in S(x)}\left(\sum_{j=0}^{m-1}(\theta^{m-j}z_{j,y}-1)+m+1\right).
$$
Since $\theta\in\mathcal E_p$ and $z_x\in\mathcal E^{m}_p$, by
non-Archimedean norm's property we get
$$
|a(x)|_p=\left\{\begin{array}{ll}
1, & \mbox{if }\ p\nmid m+1;\\
\leq p^{-k}, & \mbox{if }\ p\mid m+1.
\end{array}\right.
$$
From this we obtain
\begin{equation}\label{A_n}
\left|A_{n}(x)\right|_p=\prod_{y\in W_{n-1}}|a(y)|_p=\left\{\begin{array}{ll}
1, & \mbox{if }\ p\nmid m+1;\\
\leq p^{-k|W_{n-1}|}, & \mbox{if }\ p\mid m+1.
\end{array}\right.
\end{equation}
We use the following recurrence formula
$$
Z_{n,z}=A_{n-1}Z_{n-1,z}.
$$
From (\ref{A_n}) we get
\begin{equation}\label{Z_nA}
\left|Z_{n,z}\right|_p=\prod_{x\in V_{n-1}}\left|A_{n-1}(x)\right|_p=
\left\{\begin{array}{ll}
1, & \mbox{if }\ p\nmid m+1;\\
\leq p^{-k|V_{n-1}|}, & \mbox{if }\ p\mid m+1.
\end{array}\right.
\end{equation}
For any configuration $\sigma\in\Omega_{V_n}$ by (\ref{Z_nA}) we have
$$
\left|\mu^{(n)}_z(\sigma)\right|_p=\frac{\left|\exp_p\{H_n(\sigma)\}\prod_{x\in W_n}
z_{\sigma(x),x}\right|_p}{\left|Z_{n,z}\right|_p}=$$
$$
\frac{1}{\left|Z_{n,z}\right|_p}=\left\{\begin{array}{ll}
1, & \mbox{if }\ p\nmid m+1;\\
\geq p^{k|V_{n-1}|}, & \mbox{if }\ p\mid m+1.
\end{array}\right.
$$
which means the measure $\mu_z$ is bounded if and only if $p\nmid m+1$.
\end{proof}

\section{Three-state SOS model}

From Proposition \ref{pro2} it follows that for any $z=\{z_x,\ x\in V\}$
satisfying (\ref{p***}) there exists a unique $p$-adic Gibbs measure $\mu$.
Denote by $\mbox{TIpGM}$ the set of all translation-invariant $p$-adic Gibbs measures
for the model (\ref{ham}). Note that $\mbox{TIpGM}\subset G(H)$. However,
description of the set $\mbox{TIpGM}$ for an arbitrary $m$ is not easy.
We now suppose that $m=2$, i.e.
$\Phi=\{0,1,2\}$. We assume that $z_{2,x}\equiv1$
($z_{m,x}\equiv1$ for general $m$).

\subsection{Translation-invariant solutions}

It is natural begin with TI solutions where $z_{x}=z\in\mathcal E^m_p$
is constant. Unless otherwise stated, we concentrate on the simplest case
where $m=2$, i.e., spin values are $0,1$ and $2$. In this case we obtain
from (\ref{p***}):
\begin{equation}\label{z_0}
z_0=\left(\frac{z_0+\theta z_1+\theta^2}{\theta^2 z_0+\theta z_1+1}\right)^k,
\end{equation}
\begin{equation}\label{z_1}
z_1=\left(\frac{\theta z_0+z_1+\theta}{\theta^2 z_0+\theta z_1+1}\right)^k.
\end{equation}

\begin{pro}\label{pro3}
If $p\neq3$ then the system of equations (\ref{z_0}),(\ref{z_1}) has no
solution in $\mathcal A_p=\{(z_0,z_1)\in\mathcal E^2_p: z_0\neq1\}$.
\end{pro}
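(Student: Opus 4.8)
The plan is to show that every solution of the system actually forces $z_0=1$, so that no solution can lie in $\mathcal A_p$; remarkably, only equation (\ref{z_0}) will be needed. Write $\varepsilon=p^{-1/(p-1)}$, so that $\mathcal E_p=\{x\in\mathbb Q_p:|x|_p=1,\ |x-1|_p<\varepsilon\}$, and recall that $\theta=\exp_p(J)\in\mathcal E_p$ by Lemma \ref{lem2}. Abbreviate the numerator and common denominator in (\ref{z_0}) by $N_0=z_0+\theta z_1+\theta^2$ and $D=\theta^2 z_0+\theta z_1+1$, so a solution satisfies $z_0=(N_0/D)^k$. The central tool is the $p$-adic logarithm, which by Lemma \ref{lem2} is an isometry on $\mathcal E_p$ and which carries products to sums, hence $k$-th powers to multiplication by $k$; applying it reduces the equation to a single identity between $p$-adic absolute values.

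First I would record the key algebraic cancellation. A direct computation gives
\begin{equation*}
N_0-D=(1-\theta^2)(z_0-1)=-(\theta-1)(\theta+1)(z_0-1),
\end{equation*}
so the $z_1$-terms disappear and the difference depends on $z_0$ alone. Next I would show that the hypothesis $p\neq 3$ guarantees $|D|_p=1$: since $\theta^2 z_0,\ \theta z_1\in\mathcal E_p$, we have $D-3=(\theta^2 z_0-1)+(\theta z_1-1)$ with $|D-3|_p<\varepsilon<1=|3|_p$, whence $|D|_p=1$ by the non-Archimedean norm's property. Combining these two facts, and using $|\theta-1|_p<\varepsilon$, $|\theta+1|_p\leq1$, $|z_0-1|_p<\varepsilon$, yields
\begin{equation*}
\left|\tfrac{N_0}{D}-1\right|_p=\frac{|N_0-D|_p}{|D|_p}=|\theta-1|_p\,|\theta+1|_p\,|z_0-1|_p<\varepsilon,
\end{equation*}
so that $N_0/D\in\mathcal E_p$ and taking the logarithm of both sides of (\ref{z_0}) is legitimate.

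Finally, applying $\log_p$ in $z_0=(N_0/D)^k$ and invoking the isometry of Lemma \ref{lem2} gives
\begin{equation*}
|z_0-1|_p=|\log_p z_0|_p=|k|_p\,\Big|\log_p\tfrac{N_0}{D}\Big|_p=|k|_p\,|\theta-1|_p\,|\theta+1|_p\,|z_0-1|_p.
\end{equation*}
If $z_0\neq1$ one cancels the nonzero factor $|z_0-1|_p$ to obtain $1=|k|_p\,|\theta-1|_p\,|\theta+1|_p\leq|\theta-1|_p<\varepsilon<1$, which is absurd; hence $z_0=1$ and no solution lies in $\mathcal A_p$. I expect the only delicate point to be the estimate $|D|_p=1$, which is exactly where $p\neq3$ is used: when $p=3$ (the excluded case $p\mid m+1=3$) one only gets $|D|_3\le|3|_3<1$, the quotient $N_0/D$ need no longer belong to $\mathcal E_p$, and the logarithmic argument collapses—consistent with the phase transition that the paper predicts when $p\mid m+1$.
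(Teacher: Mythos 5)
Your proof is correct, and it takes a genuinely different route to the same conclusion. Both arguments rest on the same two pillars: the cancellation $N_0-D=(1-\theta^2)(z_0-1)$ (the paper's $A-B$), and the fact that $p\neq3$ forces the denominator to be a unit because it equals $3$ up to terms of norm $<1$. Where you diverge is in how the $k$-th power is handled. The paper stays purely algebraic: from $z_0B^k=A^k$ it factors
$(z_0-1)\bigl(B^k+(\theta^2-1)\sum_{i=0}^{k-1}A^{k-1-i}B^i\bigr)=0$
and observes that the second factor has norm $\left|3^k\right|_p=1$, hence cannot vanish, so $z_0=1$. You instead pass to the $p$-adic logarithm and use the identity $|\log_p z|_p=|z-1|_p$ of Lemma \ref{lem2} to convert $z_0=(N_0/D)^k$ into the norm equation $|z_0-1|_p=|k|_p\,|\theta^2-1|_p\,|z_0-1|_p$, which forces $z_0=1$ since $|k|_p\,|\theta^2-1|_p<1$. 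Your version is slightly more quantitative --- it exhibits an actual contraction in the $z_0$-direction, very much in the spirit of the contraction arguments the paper itself deploys later (the derivative bound for equation (\ref{z_1*}) in Section 4 and Lemmas \ref{lem5}--\ref{lem6} in Section 5). The one fact you invoke that is not listed in Lemma \ref{lem2} is the multiplicativity $\log_p(xy)=\log_p(x)+\log_p(y)$ (hence $\log_p(w^k)=k\log_p(w)$) on $\mathcal E_p$; this is standard and holds on the domain of convergence (it is in the references \cite{29,sc,48} the paper cites), so it is a presentational dependence rather than a gap, but the paper's factorization argument has the minor advantage of needing nothing beyond the strong triangle inequality.
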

\begin{proof}
Let $p\neq3$. Denote $A=z_0+\theta z_1+\theta^2$, $B=\theta^2 z_0+\theta z_1+1$.
Then from (\ref{z_0}) we get:
\begin{equation}\label{z_0-1}
(z_0-1)\left(B^k+(\theta^2-1)\sum_{i=0}^{k-1}A^{k-1-i}B^i\right)=0.
\end{equation}
Since $\theta,z_0,z_1\in\mathcal E_p$ we have
$$A\equiv3(\operatorname{mod }p)\quad\mbox{and}\quad B\equiv3(\operatorname{mod }p).$$
Consequently,
\begin{equation}\label{Bk}
\left|B^k+(\theta^2-1)\sum_{i=0}^{k-1}A^{k-1-i}B^i\right|_p=\left|3^k\right|_p=1.
\end{equation}
From (\ref{z_0-1}) and (\ref{Bk}) we obtain $z_0=1$.
\end{proof}
Observe that $z_0=1$ satisfies equation (\ref{z_0}) independently of
$k, \theta$ and $z_1$.
Substituting
$z_0=1$ into (\ref{z_1}), we have
\begin{equation}\label{z_1*}
z_1=\left(\frac{2\theta+z_1}{\theta^2+\theta z_1+1}\right)^k.
\end{equation}
We consider the function
$$
f(x)=\left(\frac{2a+x}{a^2+ax+1}\right)^k.
$$
Let $p\neq3$ and $a\in\mathcal E_p$. Then for any $x\in\mathcal E_p$ from non-Archimedean
norm's property we get
$$
\left|f(x)\right|_p=\left(\frac{\left|2a+x\right|_p}{\left|a^2+ax+1\right|_p}\right)^k=1,
$$
and
$$
\left|f(x)-1\right|_p=\left|\frac{(a-1)(1-x-a)\sum_{i=0}^{k-1}(2a+x)^{k-1-i}(a^2+ax+1)^i}{(a^2+ax+1)^k}\right|_p<p^{-1/(p-1)}.
$$
Thus, we have shown that $f:\mathcal E_p\mapsto \mathcal E_p$ if $p\neq3$ and $a\in\mathcal E_p$.

Now we shall show that $|f'(x)|_p<1$ for any $x\in\mathcal E_p$.
$$
f'(x)=-\frac{k(a^2-1)}{\left(a^2+ax+1\right)^2}\left(\frac{2a+x}{a^2+ax+1}\right)^{k-1}.
$$
Since $p\neq3$ and $a\in\mathcal E_p$ we obtain
$$
|f'(x)|_p=\left|k\left(a^2-1\right)\right|_p\leq\frac{1}{p}.
$$
Hence, we get
$$
\left|f(x)-f(y)\right|_p\leq\frac{1}{p}|x-y|_p\qquad\mbox{for all }x,y\in\mathcal E_p.
$$

Consequently, the function $f$ has a unique fixed point $x^*$ as $\mathcal E_p$ is compact.
Thus, we have proved the following proposition
\begin{pro}\label{pro4}
If $p\neq3$ then the system of equations (\ref{z_0}),(\ref{z_1}) has a unique
solution in $\mathcal E^2_p\setminus\mathcal A_p$.
\end{pro}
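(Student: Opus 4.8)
The plan is to reduce the system to a single scalar fixed-point equation on $\mathcal E_p$ and then to read off existence and uniqueness from the contraction estimate already in hand. On $\mathcal E^2_p\setminus\mathcal A_p$ the first coordinate is forced to be $z_0=1$, and, as remarked after Proposition \ref{pro3}, $z_0=1$ solves (\ref{z_0}) identically, regardless of $k$, $\theta$ and $z_1$. Thus a pair $(1,z_1)$ solves the system if and only if $z_1$ solves (\ref{z_1*}), i.e. $z_1=f(z_1)$ with $f(x)=\left(\frac{2\theta+x}{\theta^2+\theta x+1}\right)^k$, the map analyzed above taken at $a=\theta\in\mathcal E_p$. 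Counting solutions of (\ref{z_0}),(\ref{z_1}) in $\mathcal E^2_p\setminus\mathcal A_p$ is therefore the same as counting fixed points of $f$ in $\mathcal E_p$.

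Next I would invoke the two properties established just before the statement. First, $f$ maps $\mathcal E_p$ into itself; here $p\neq3$ is used to guarantee that both $2\theta+x$ and $\theta^2+\theta x+1$ are units and that $f(x)$ lands back in $\mathcal E_p$. Second, the bound $|f(x)-f(y)|_p\leq\frac{1}{p}|x-y|_p$ for all $x,y\in\mathcal E_p$ exhibits $f$ as a contraction with ratio $1/p<1$. It remains to note that $\mathcal E_p$ is a complete metric space: the defining condition $|x-1|_p<p^{-1/(p-1)}$ cuts out a ball, and since $|x-1|_p$ takes values only in the discrete set $p^{\mathbb Z}\cup\{0\}$ this open ball coincides with a closed ball, hence is compact and in particular complete (and nonempty, as $1\in\mathcal E_p$).

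The conclusion then follows from the contraction mapping principle: a contraction of a nonempty complete metric space into itself has exactly one fixed point $x^\ast\in\mathcal E_p$. Translating back, $(z_0,z_1)=(1,x^\ast)$ is the unique solution of (\ref{z_0}),(\ref{z_1}) in $\mathcal E^2_p\setminus\mathcal A_p$, which is the assertion of the proposition.

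I do not anticipate a real obstacle, since the analytic heart of the matter—the self-map and contraction estimates—has already been carried out, and those estimates are exactly where $p\neq3$ enters: for $p=3$ one has $2\theta+x\equiv\theta^2+\theta x+1\equiv3\equiv0\pmod 3$, so these quantities cease to be units, the self-map property fails, and uniqueness can break down. This is precisely the regime in which a phase transition is expected. The only step needing a word of justification is the completeness of $\mathcal E_p$, which as noted above is immediate from the clopen, hence compact, nature of $p$-adic balls.
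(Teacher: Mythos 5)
Your proposal is correct and follows essentially the same route as the paper: the paper's own argument (the text preceding the proposition) likewise reduces the system on $\mathcal E^2_p\setminus\mathcal A_p$ to the fixed-point equation $z_1=f(z_1)$ with $z_0=1$, establishes the self-map and contraction properties of $f$ for $p\neq3$, and concludes uniqueness of the fixed point from compactness of $\mathcal E_p$. Your added remark justifying completeness of $\mathcal E_p$ (a clopen, hence compact, ball) is a harmless sharpening of the paper's appeal to compactness.
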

From Proposition \ref{pro3} and Proposition \ref{pro4} we get the following
\begin{thm}
Let $p\neq3$. Then there exists a unique translation-invariant $p$-adic
Gibbs measure for the three-state $p$-adic SOS model on a Cayley tree of order $k$.
\end{thm}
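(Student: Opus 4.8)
The plan is to reduce the uniqueness of the translation-invariant $p$-adic Gibbs measure to a counting problem for the fixed-point system (\ref{z_0})--(\ref{z_1}), and then to read off the count from the two preceding propositions. By Proposition \ref{pro2}, part $2)$, a measure $\mu\in G(H)$ is translation-invariant precisely when the associated functions $z_{i,x}$ are constant in $x$; combined with the observation recorded at the start of this section that every solution of (\ref{p***}) determines a unique $p$-adic Gibbs measure, this puts $\mbox{TIpGM}$ in one-to-one correspondence with the set of constant solutions $(z_0,z_1)\in\mathcal E_p^2$ of (\ref{z_0})--(\ref{z_1}) (recall the normalization $z_{2,x}\equiv1$). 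Hence it suffices to show that, for $p\neq3$, this system has exactly one solution in $\mathcal E_p^2$.

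First I would partition the solution space according to the first coordinate, $\mathcal E_p^2=\mathcal A_p\sqcup(\mathcal E_p^2\setminus\mathcal A_p)$, where $\mathcal A_p=\{(z_0,z_1)\in\mathcal E_p^2: z_0\neq1\}$. On $\mathcal A_p$ there are no solutions by Proposition \ref{pro3}; on the complement, where $z_0=1$, equation (\ref{z_1}) collapses to (\ref{z_1*}) and Proposition \ref{pro4} supplies exactly one solution. Since these two regions are disjoint and exhaust $\mathcal E_p^2$, the system has precisely one solution, and the theorem follows by the correspondence above.

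The theorem is thus a clean amalgamation of Propositions \ref{pro3} and \ref{pro4}, so the real work — and the only place where the hypothesis $p\neq3$ is genuinely used — already sits in those two statements. In Proposition \ref{pro3} the condition enters through $A\equiv B\equiv3\pmod p$, which forces $|3^k|_p=1$ and hence $z_0=1$; this is exactly what fails at $p=3=m+1$, matching the boundedness dichotomy of the earlier theorem ($p\mid m+1$ being the degenerate case). In Proposition \ref{pro4} it enters through the contraction estimate $|f'(x)|_p=|k(\theta^2-1)|_p\le 1/p$ on the compact ultrametric set $\mathcal E_p$, which is what makes $f$ a contraction with a unique fixed point. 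If I had to prove the statement from scratch, the main obstacle would be precisely this contraction bound: one must check, using $\theta\in\mathcal E_p$, that $|\theta-1|_p\le p^{-1}$ and that the remaining factors stay of norm one, so that $f$ genuinely contracts and a fixed-point argument on the compact space $\mathcal E_p$ delivers existence and uniqueness simultaneously.
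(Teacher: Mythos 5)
Your proposal is correct and is essentially identical to the paper's own argument: the paper states this theorem as an immediate consequence of Propositions \ref{pro3} and \ref{pro4}, i.e.\ exactly the partition of $\mathcal E_p^2$ into $\mathcal A_p$ and its complement that you describe, together with the correspondence (via Proposition \ref{pro2}) between translation-invariant measures and constant solutions of (\ref{z_0})--(\ref{z_1}). Your supplementary remarks on where $p\neq3$ is actually used inside the two propositions also match the paper's proofs, so there is nothing to correct.
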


\begin{pro}\label{pro5} Let $p=3$. If
$\theta\in\{x\in\mathcal E_3: |x-1|_3<3^{-2}\}$ then the system of equations
(\ref{z_0}),(\ref{z_1}) has a solution in $\mathcal E_3^2$.
\end{pro}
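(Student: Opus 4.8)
The plan is to produce a \emph{symmetric} solution, i.e.\ to put $z_0=1$. As already noted, $z_0=1$ satisfies (\ref{z_0}) identically in $\theta,z_1$ and $k$, so the system collapses to the single scalar equation (\ref{z_1*}), that is, to finding a fixed point of the function $f(x)=\left(\frac{2\theta+x}{\theta^2+\theta x+1}\right)^k$ introduced above (with $a=\theta$). Once a fixed point $z_1^\ast\in\mathcal E_3$ is found, the pair $(z_0,z_1)=(1,z_1^\ast)$ lies in $\mathcal E_3^2$ and solves (\ref{z_0})--(\ref{z_1}). Thus everything reduces to a fixed-point statement for $f$ on a suitable subset of $\mathcal E_3$.

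I would run the same contraction scheme as in the case $p\neq3$, but on a ball small enough to compensate for the fact that $f$ need no longer preserve all of $\mathcal E_3$ when $p=3$. The hypothesis $\theta\in\{x\in\mathcal E_3:|x-1|_3<3^{-2}\}$ means $|\theta-1|_3\le 3^{-3}$, and the natural domain is the compact set $U=\{x\in\mathbb Q_3:|x-1|_3\le 3^{-2}\}\subset\mathcal E_3$. I must then check that $f(U)\subseteq U$ and that $f$ contracts on $U$.

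For the self-map property I would use that for $x\in U$ both $2\theta+x$ and $\theta^2+\theta x+1$ are $\equiv 3\pmod 9$, hence of valuation exactly $1$, while their difference factors as $(2\theta+x)-(\theta^2+\theta x+1)=-(\theta-1)(\theta+x-1)$, of valuation $v_3(\theta-1)\ge 3$. Therefore the ratio equals $1+u$ with $|u|_3\le 3^{-2}$, and since every term of $(1+u)^k-1$ has valuation $\ge v_3(u)\ge 2$ we obtain $|f(x)-1|_3\le 3^{-2}$, i.e.\ $f(x)\in U$. For the contraction I would invoke the derivative already computed above, $f'(x)=-\frac{k(\theta^2-1)}{(\theta^2+\theta x+1)^2}\left(\frac{2\theta+x}{\theta^2+\theta x+1}\right)^{k-1}$: on $U$ the last factor is a unit, $|\theta^2+\theta x+1|_3=3^{-1}$, and $|\theta^2-1|_3=|\theta-1|_3\,|\theta+1|_3\le 3^{-3}$, so that $|f'(x)|_3\le |k|_3\cdot 3^{-3}/3^{-2}\le 3^{-1}<1$. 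Hence $|f(x)-f(y)|_3\le 3^{-1}|x-y|_3$ on $U$, and compactness of $U$ yields a fixed point $z_1^\ast\in U\subseteq\mathcal E_3$, as required.

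The delicate point — and the only place the full strength of the hypothesis is used — is the derivative bound, which requires $|\theta^2-1|_3$ to beat $|\theta^2+\theta x+1|_3^2=3^{-2}$. This holds precisely because $|\theta-1|_3\le 3^{-3}$; for $\theta$ with $|\theta-1|_3=3^{-1}$ one only gets $|f'(x)|_3\le |k|_3\cdot 3$, and then neither the self-map nor the contraction property need survive. I would therefore expect the main obstacle to be this estimate, together with correctly pinning down the valuation of the denominator $\theta^2+\theta x+1$ on $U$.
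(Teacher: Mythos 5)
Your proof is correct, but it takes a genuinely different route from the paper's. Both arguments begin identically --- set $z_0=1$, which reduces the system to the scalar equation (\ref{z_1*}) --- but from there the paper substitutes $z_1=x^k$ and turns (\ref{z_1*}) into the polynomial equation $g(x)=\theta x^{k+1}-x^k+(\theta^2+1)x-2\theta=0$, then applies the generalized Hensel lemma (Lemma~\ref{lem1} with $i=1$) at $a_0=1$: the hypothesis $\theta\equiv1\ (\operatorname{mod}\ 27)$ gives $g(1)=\theta(\theta-1)\equiv0\ (\operatorname{mod}\ 27)$ and $g'(1)=(\theta-1)(\theta+k+2)+3\equiv0\ (\operatorname{mod}\ 3)$, $\not\equiv0\ (\operatorname{mod}\ 9)$, producing a root $x_*\equiv1\ (\operatorname{mod}\ 9)$ and hence the solution $z_1^*=x_*^k$. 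You instead run a Banach fixed-point argument for $f$ directly on the ball $U=\{x:|x-1|_3\le 3^{-2}\}$. Your route is more self-contained (no auxiliary substitution, no Hensel) and localizes the solution as the unique fixed point in $U$; the paper's route instead pins down the auxiliary root by an explicit congruence. One caveat about your contraction step: in $\mathbb Q_p$ a derivative bound on a ball does \emph{not} by itself imply the corresponding Lipschitz bound (e.g.\ $f(x)=x^p$ on $\mathbb Z_p$ satisfies $|f'(x)|_p\le p^{-1}$ everywhere but is not a contraction on $\mathbb Z_p$), so deducing $|f(x)-f(y)|_3\le 3^{-1}|x-y|_3$ from $|f'|_3\le 3^{-1}$ needs justification --- though the paper commits exactly the same inference in its $p\neq3$ argument. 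The gap is harmless here because the Lipschitz bound can be verified directly: with $g(x)=\frac{2\theta+x}{\theta^2+\theta x+1}$ one computes
\begin{equation*}
g(x)-g(y)=\frac{(1-\theta^2)(x-y)}{(\theta^2+\theta x+1)(\theta^2+\theta y+1)},
\end{equation*}
so that $|g(x)-g(y)|_3\le \frac{3^{-3}}{3^{-2}}|x-y|_3=3^{-1}|x-y|_3$ on $U$, and then $f(x)-f(y)=\bigl(g(x)-g(y)\bigr)\sum_{i=0}^{k-1}g(x)^ig(y)^{k-1-i}$, where the sum has norm at most $1$. With that substitution your argument is complete and all remaining estimates (the self-map property of $U$ and the valuations of numerator and denominator) check out.
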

\begin{proof} Let $p=3$. We consider the following polynomial with
$p$-adic integers coefficients:
\begin{equation}\label{g(x)}
g(x)=\theta x^{k+1}-x^k+(\theta^2+1)x-2\theta.
\end{equation}
We will check conditions of Hensel's lemma.
\begin{equation}\label{hensel}
g(1)=\theta(\theta-1)\quad\mbox{and}\quad g^\prime(1)=(\theta-1)(\theta+k+2)+3.
\end{equation}
Since $\theta\in\{x\in\mathcal E_3: |x-1|_3<3^{-2}\}$ we get
\begin{equation}\label{theta}
\theta\equiv1(\operatorname{mod }27).
\end{equation}
Using this, from (\ref{hensel}) we have
$$\begin{array}{ll}
g(1)\equiv0(\operatorname{mod }27),\\[2mm]
g^\prime(1)\equiv0(\operatorname{mod }3),\quad g^\prime(1)\not\equiv0(\operatorname{mod }9).
\end{array}
$$
Thus, we have shown that for the function $g(x)$ all conditions of Hensel's lemma are satisfied.
Then the function $g(x)$ has a unique $p$-adic integer root $x_*$ which satisfies
$x_*\equiv1(\operatorname{mod }9)$.

It is easy to see $x_*^k\in\mathcal E_3$ for any $k\geq1$ and $z_1^*=x_*^k$ is a
solution to (\ref{z_1*}). Consequently, $(1,z_1^*)\in\mathcal E_3^2$ is a
solution to (\ref{z_0}),(\ref{z_1}).
\end{proof}

\begin{pro}\label{pro6} Let $p=3$. If
$k$ is an even and it is not divisible by $p$ then a system of equations
(\ref{z_0}),(\ref{z_1}) has a solution in $\mathcal E_3^2$.
\end{pro}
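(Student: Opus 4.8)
The plan is to reuse the symmetric reduction behind Proposition \ref{pro5} but to place the Hensel root at a different residue. As noted after Proposition \ref{pro3}, $z_0=1$ solves (\ref{z_0}) for all $\theta,z_1,k$, so it is enough to produce $z_1\in\mathcal E_3$ solving (\ref{z_1*}). Exactly as in the proof of Proposition \ref{pro5}, a solution $z_1$ of (\ref{z_1*}) corresponds, via $z_1=x^k$, to a root $x$ of the polynomial $g(x)=\theta x^{k+1}-x^k+(\theta^2+1)x-2\theta$ from (\ref{g(x)}). The whole problem thus reduces to exhibiting a root $x_*$ of $g$ for which $x_*^k\in\mathcal E_3$.

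First I would reduce $g$ modulo $3$. Since $\theta\in\mathcal E_3$ forces $\theta\equiv1(\operatorname{mod }3)$, one finds $g(x)\equiv(x-1)(x^k+2)(\operatorname{mod }3)$. When $k$ is even, $2^k\equiv1(\operatorname{mod }3)$, so $x\equiv2$ is a root of $x^k+2$ and hence of $g$ modulo $3$; this is the residue at which I would invoke Hensel's lemma (Lemma \ref{lem1} with $i=0$), in contrast to the root $x\equiv1$ used for Proposition \ref{pro5}. Concretely I would check $g(2)\equiv0(\operatorname{mod }3)$ and, using $g'(x)=\theta(k+1)x^k-kx^{k-1}+(\theta^2+1)$, compute $g'(2)\equiv2k(\operatorname{mod }3)$. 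Here the parity of $k$ is precisely what makes $g(2)\equiv0$, while the hypothesis $p\nmid k$ is exactly what guarantees $g'(2)\not\equiv0(\operatorname{mod }3)$. Lemma \ref{lem1} then produces a unique root $x_*\in\mathbb Z_3$ with $x_*\equiv2(\operatorname{mod }3)$.

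It remains to verify $z_1:=x_*^k\in\mathcal E_3$. The root $x_*$ is a unit because $x_*\equiv2\not\equiv0(\operatorname{mod }3)$, and since $k$ is even and $x_*\equiv-1(\operatorname{mod }3)$ we get $x_*^k\equiv1(\operatorname{mod }3)$ with $|x_*^k|_3=1$; hence $z_1\in\mathcal E_3$, and together with $z_0=1\in\mathcal E_3$ this gives the desired solution $(1,x_*^k)\in\mathcal E_3^2$ of (\ref{z_0}),(\ref{z_1}). The step I expect to demand the most care is the pair of Hensel congruences at $x\equiv2$, since this is where both hypotheses enter and where the argument genuinely departs from Proposition \ref{pro5}: rather than forcing $\theta$ close to $1$ so that the root near $x\equiv1$ lands in $\mathcal E_3$, here one exploits evenness of $k$ to send the unit root $x_*\equiv-1$ back into $\mathcal E_3$ after raising to the $k$-th power.
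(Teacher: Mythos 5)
Your proposal is correct and takes essentially the same route as the paper's own proof: both set $z_0=1$, apply Hensel's lemma (Lemma \ref{lem1}) to the polynomial $g$ of (\ref{g(x)}) at the residue $x\equiv2(\operatorname{mod }3)$ — with evenness of $k$ giving $g(2)\equiv0(\operatorname{mod }3)$ and $3\nmid k$ giving $g^\prime(2)\not\equiv0(\operatorname{mod }3)$ — and then take $z_1=x_*^k\in\mathcal E_3$ so that $(1,x_*^k)$ solves (\ref{z_0}),(\ref{z_1}). The only difference is presentational: you make explicit the factorization $g(x)\equiv(x-1)(x^k+2)(\operatorname{mod }3)$ and the value $g^\prime(2)\equiv2k(\operatorname{mod }3)$, which the paper computes directly without comment.
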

\begin{proof} Let $k$ is an even and it is not divisible by $3$. Then we get
\begin{equation}\label{2^k}
2^k\equiv1(\operatorname{mod }3)\quad\mbox{and}\quad
k\not\equiv0(\operatorname{mod }3).
\end{equation}
Consider the function $g(x)$ (see (\ref{g(x)})). We have
$$
g(2)=2^{k+1}\theta-2^k+2(\theta^2-\theta+1)
$$
and
$$
g^\prime(2)=(k+1)2^k\theta-k2^{k-1}+\theta^2+1.
$$
Since $\theta\in\mathcal E_3$ and using (\ref{2^k}),
we obtain
$$
g(2)=2^{k+1}\theta-2^k+2(\theta^2-\theta+1)\equiv0(\operatorname{mod }3),
$$
and
$$
g^\prime(2)=(k+1)2^k\theta-k2^{k-1}+\theta^2+1\not\equiv0(\operatorname{mod }3).
$$
Then by Hensel's lemma there exists a unique $p$-adic integer $\tilde{x}_{*}$
such that $g(\tilde{x}_{*})=0$
and $\tilde{x}_{*}\equiv2(\operatorname{mod }3)$.
It is easy to check that $\tilde{x}_*^k\in\mathcal E_3$. Denote $\tilde{z}_*=\tilde{x}_*^k$.
Then $\tilde{z}_*$ is a solution to (\ref{z_1}). Consequently,
$(1,\tilde{z}_1^*)\in\mathcal E_3^2$ is a
solution to (\ref{z_0}),(\ref{z_1}).
\end{proof}

\begin{cor}\label{cor} Let $p=3$ and
$\theta\in\{x\in\mathcal E_3: |x-1|_3<3^{-2}\}$. If
$k$ is an even and it is not divisible by $p$ then a system of equations
(\ref{z_0}),(\ref{z_1}) has at least two solutions in $\mathcal E_3^2$.
\end{cor}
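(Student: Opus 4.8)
The plan is to stitch together Propositions~\ref{pro5} and \ref{pro6} and then argue that the two solutions they produce are genuinely different. Under the hypotheses of the corollary both sets of hypotheses are simultaneously in force: $p=3$ together with $\theta\in\{x\in\mathcal E_3:|x-1|_3<3^{-2}\}$ yields, by Proposition~\ref{pro5}, a solution $(1,z_1^*)$ with $z_1^*=x_*^k$, where $x_*$ is the root of $g$ from \eqref{g(x)} normalized by $x_*\equiv1\pmod{9}$; and $p=3$ with $k$ even and $3\nmid k$ yields, by Proposition~\ref{pro6}, a solution $(1,\tilde z_*)$ with $\tilde z_*=\tilde x_*^k$, where $\tilde x_*$ is the root of $g$ normalized by $\tilde x_*\equiv2\pmod{3}$. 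Since both solutions carry the same first coordinate $1$, the entire content of the statement collapses to proving the single inequality $z_1^*\neq\tilde z_*$.

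To separate the two $z_1$-values I would exploit that they both come from roots of one and the same polynomial $g$. Rewriting $g(x)=0$ as $x(\theta^2+\theta x^k+1)=2\theta+x^k$, every root $x$ of $g$ satisfies this relation with $x^k$ playing the role of $z_1$. Suppose for contradiction that $z_1^*=\tilde z_*=:z$, so that $x_*^k=\tilde x_*^k=z$. Then both $x_*$ and $\tilde x_*$ solve the \emph{same} affine equation $(\theta^2+\theta z+1)x=2\theta+z$. Because $x_*\equiv1$ and $\tilde x_*\equiv2\pmod{3}$ are distinct, the coefficient must vanish, $\theta^2+\theta z+1=0$, which in turn forces $2\theta+z=0$, i.e.\ $z=-2\theta$ and hence $\theta^2=1$.

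The contradiction is then extracted by a residue computation modulo $9$. On the one hand $x_*\equiv1\pmod{9}$ gives $z=x_*^k\equiv1\pmod{9}$; on the other hand $\theta\equiv1\pmod{27}$ (from \eqref{theta}) gives $z=-2\theta\equiv-2\equiv7\pmod{9}$, which is impossible. Hence $z_1^*\neq\tilde z_*$, and $(1,z_1^*),(1,\tilde z_*)$ are two distinct solutions of \eqref{z_0},\eqref{z_1} in $\mathcal E_3^2$.

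I expect the main obstacle to be precisely this distinctness step, and within it the need to exclude the degenerate case $\theta^2+\theta z+1=0$: a priori the map (root of $g$)$\,\mapsto x^k$ need not be injective, so one cannot merely assert that different roots give different $z_1$. The affine-equation trick is what localizes the only possible failure of injectivity to the vanishing of $\theta^2+\theta z+1$, and the Hensel normalizations $x_*\equiv1\pmod{9}$ and $\theta\equiv1\pmod{27}$ then close that gap. As a reserve route I would instead note that, since $3\nmid k$ and $k$ is even, the map $x\mapsto x^k$ is a bijection of $\mathcal E_3=1+3\mathbb Z_3$ onto itself; then $z_1^*=\tilde z_*$ would force $x_*=-\tilde x_*$, and $g(x_*)=g(-x_*)=0$ with $k$ even entails $x_*^k=-2\theta$, again contradicting $x_*^k\equiv1\pmod{9}$.
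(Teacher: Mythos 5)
Your proof is correct, and it is in fact more complete than the paper's treatment: the paper states this corollary with no proof at all, presenting it as an immediate juxtaposition of Propositions~\ref{pro5} and~\ref{pro6}. You rightly observe that juxtaposition alone does not yield \emph{two} solutions: both constructed solutions have first coordinate $1$, and since $k$ is even the map $x\mapsto x^k$ identifies $x$ with $-x$, so distinctness of the Hensel roots $x_*\equiv1\pmod 9$ and $\tilde x_*\equiv2\pmod 3$ of $g$ in \eqref{g(x)} does not by itself give $x_*^k\neq\tilde x_*^k$. Your affine-equation trick closes this gap correctly: if $x_*^k=\tilde x_*^k=z$, then both roots satisfy $(\theta^2+\theta z+1)x=2\theta+z$, so the coefficient and the right-hand side must both vanish, giving $z=-2\theta$ and $\theta^2=1$; this contradicts $z=x_*^k\equiv1\pmod 9$, since $-2\theta\equiv-2\equiv7\pmod 9$ by \eqref{theta}. (A slightly shorter variant: reducing $g$ modulo $27$ via \eqref{theta} gives $g(x)\equiv(x-1)(x^k+2)\pmod{27}$; since $\tilde x_*-1$ is a unit, $g(\tilde x_*)=0$ forces $\tilde x_*^k\equiv-2\pmod{27}$, while $x_*^k\equiv1\pmod 9$.) Your reserve route, using that $x\mapsto x^k$ is a bijection of $\mathcal E_3=1+3\mathbb Z_3$ when $3\nmid k$ and then ruling out $\tilde x_*=-x_*$, is equally sound.

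One caveat, which you inherit from the paper rather than introduce: the hypothesis $|\theta-1|_3<3^{-2}$ admits $\theta=1$ (i.e.\ $J=0$), and there the corollary is actually false. At $\theta=1$, equations \eqref{z_0},\eqref{z_1} force $z_0=z_1=1$ wherever the common denominator is nonzero, and Proposition~\ref{pro6}'s constructed point $(1,\tilde x_*^k)=(1,-2)$ annihilates the denominator $\theta^2 z_0+\theta z_1+1$, hence is not a solution of the system at all. Your distinctness argument is untouched by this ($1\neq-2$), but the step ``Proposition~\ref{pro6} yields the solution $(1,\tilde x_*^k)$,'' which you, like the paper, take as given, requires $\theta\neq1$. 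Note that your own algebra repairs this for $\theta\neq 1$: the denominator $\theta^2+\theta\tilde x_*^k+1$ can vanish only if $\tilde x_*^k=-2\theta$ and $\theta^2=1$, which under \eqref{theta} means $\theta=1$; so for $\theta\neq1$ both points are genuine solutions and your proof goes through.
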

\begin{thm}
Let $p=3$ and $\theta\in\{x\in\mathcal E_3: |x-1|_3<3^{-2}\}$.
If
$k$ is an even and it is not divisible by $p$ then a phase transition occurs for the
three state $p$-adic SOS model on a Cayley tree of order $k$.
\end{thm}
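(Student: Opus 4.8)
The plan is to read the statement as an almost immediate consequence of Corollary \ref{cor} together with the dictionary between solutions of the recurrence and $p$-adic Gibbs measures. Recall that by definition a phase transition occurs precisely when $|G(H)|\geq 2$. By Proposition \ref{pro2}, every solution $z=\{z_x\}$ of the system (\ref{p***}) determines a unique pGM, a constant solution $z_x\equiv z$ yields a translation-invariant one, and distinct normalized functions $z$ (with $z_{m,x}\equiv 1$) produce distinct local distributions (\ref{mu_n}), hence distinct measures. For $m=2$ the translation-invariant solutions are exactly the solutions of (\ref{z_0}),(\ref{z_1}). Thus it suffices to exhibit two \emph{different} solutions in $\mathcal E_3^2$, and under the present hypotheses these are supplied by the two propositions underlying Corollary \ref{cor}.

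First I would recall those two solutions. Under $p=3$, $\theta\in\{x\in\mathcal E_3:|x-1|_3<3^{-2}\}$ and $k$ even with $3\nmid k$, Proposition \ref{pro5} produces a solution $(1,z_1^*)$ with $z_1^*=x_*^k$, where $x_*$ is the root of $g$ (see (\ref{g(x)})) satisfying $x_*\equiv 1\pmod{9}$, while Proposition \ref{pro6} produces a solution $(1,\tilde z_1^*)$ with $\tilde z_1^*=\tilde x_*^k$, where $\tilde x_*\equiv 2\pmod{3}$. Both lie in $\mathcal E_3^2$, so the only point that genuinely requires an argument is that these two solutions are distinct.

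This distinctness is, I expect, the one delicate step, and I would handle it as follows. Since $g(x)=0$ is equivalent to $x=\frac{2\theta+x^k}{\theta^2+\theta x^k+1}$, every root of $g$ is recovered from the value $x^k$ by one and the same fractional-linear expression $\phi(w)=\frac{2\theta+w}{\theta^2+\theta w+1}$. Hence if one had $z_1^*=x_*^k=\tilde x_*^k=\tilde z_1^*$, then both $x_*$ and $\tilde x_*$ would equal $\phi(z_1^*)$ and therefore coincide, contradicting $x_*\equiv 1$ and $\tilde x_*\equiv 2\pmod{3}$. Therefore $z_1^*\neq\tilde z_1^*$, so $(1,z_1^*)$ and $(1,\tilde z_1^*)$ are two genuinely different constant solutions of (\ref{z_0}),(\ref{z_1}).

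Finally I would conclude by transport through Propositions \ref{pro1} and \ref{pro2}. Applying Proposition \ref{pro2} to each constant solution gives two translation-invariant measures $\mu_1,\mu_2\in G(H)$; because the associated functions $z$ differ, the local distributions $\mu^{(n)}$ in (\ref{mu_n}) differ, whence $\mu_1\neq\mu_2$ and $|G(H)|\geq 2$. By definition this means that a phase transition occurs for the three-state $p$-adic SOS model on $\Gamma^k$. All steps other than the distinctness of the two solutions are a direct translation, so essentially all the real content has already been assembled in the preceding propositions.
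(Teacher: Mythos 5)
Your proposal is correct and follows, in structure, exactly the route the paper intends: the paper prints no proof of this theorem at all --- it is presented as an immediate consequence of Corollary \ref{cor} (itself stated without proof, as the conjunction of Propositions \ref{pro5} and \ref{pro6}) together with the dictionary of Propositions \ref{pro1} and \ref{pro2}. What you add, and this is a genuine improvement, is the verification that the two constructed solutions are distinct. The paper nowhere checks this, and it is not automatic: since $k$ is even, both $z_1^*=x_*^k$ and $\tilde z_1^*=\tilde x_*^k$ are $\equiv 1\pmod 3$, and if $\tilde x_*\equiv -1\pmod 9$ they would even agree modulo $9$, so the congruences $x_*\equiv 1\pmod 9$ and $\tilde x_*\equiv 2\pmod 3$ do not by themselves separate the $k$-th powers. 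Your observation that every root of $g$ is recovered from its $k$-th power by one and the same fractional-linear map $\phi$, so that $x_*^k=\tilde x_*^k$ would force $x_*=\tilde x_*$, closes this hole cleanly.

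One caveat: $g(x)=0$ yields $x=\phi(x^k)$ only when the denominator $\theta^2+\theta x^k+1$ is nonzero; if it vanishes then, since $g(x)=x\left(\theta x^k+\theta^2+1\right)-\left(x^k+2\theta\right)$, the numerator vanishes as well, which forces $x^k=-2\theta$ and then $1-\theta^2=0$, i.e.\ $\theta=1$ inside $\mathcal E_3$. So your distinctness argument is valid exactly for $\theta\neq 1$. At $\theta=1$, which the hypothesis $|\theta-1|_3<3^{-2}$ permits, the theorem itself fails: equations (\ref{p***}) then force $z_{i,x}\equiv 1$, so the $p$-adic Gibbs measure is unique, and the paper's own construction in Proposition \ref{pro6} degenerates there ($\tilde x_*^k=-2$ turns (\ref{z_1*}) into $0/0$). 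This boundary defect is inherited from the paper, not introduced by you; apart from it your proof is complete.
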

\begin{pro}\label{pro7}
Let $p=3$ and $k=2$.\\
1) If
$\theta\not\in\{x\in\mathcal E_3: |x-10|_3<3^{-2}\}$ then the system of equations
(\ref{z_0}),(\ref{z_1}) has no solutions in
$\mathcal A_3=\left\{(z_0,z_1)\in\mathcal E_3^2: z_0\neq1\right\}$.\\
2) If
$\theta\in\{x\in\mathcal E_3: |x-37|_3<3^{-3}\}$ then the system of equations
(\ref{z_0}),(\ref{z_1}) has two solutions in
$\mathcal A_3=\left\{(z_0,z_1)\in\mathcal E_3^2: z_0\neq1\right\}$.
\end{pro}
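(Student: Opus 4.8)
The plan is to eliminate $z_1$ and collapse the pair \eqref{z_0}, \eqref{z_1} (with $k=2$) to a single one‑variable equation whose roots can be counted $3$‑adically. Since we work in $\mathcal{A}_3$, i.e. $z_0\neq1$, equation \eqref{z_0} may be replaced by its cofactor: specializing \eqref{z_0-1} to $k=2$ gives $B^{2}+(\theta^{2}-1)(A+B)=0$, where $A=z_0+\theta z_1+\theta^{2}$ and $B=\theta^{2}z_0+\theta z_1+1$. Read as a quadratic in $z_1$, its discriminant simplifies to $4\theta^{2}z_0(\theta^{2}-1)^{2}$, a perfect square up to the factor $z_0$; since $z_0\in\mathcal{E}_3$ has a square root $s\in\mathcal{E}_3$ with $s\equiv1\pmod3$ by Theorem~\ref{tx2}, I can solve explicitly and obtain $z_1=\bigl(-\theta^{2}(z_0+1)\pm(\theta^{2}-1)s\bigr)/\theta$.

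Substituting this $z_1$ into \eqref{z_1} (written as $z_1B^{2}=C^{2}$, with $B^{2}=(1-\theta^{2})(A+B)$ from the cofactor) and writing $s=\sqrt{z_0}$, one checks the identities $A+B=(1-\theta^{2})(s+1)^{2}$ and $C:=\theta z_0+z_1+\theta=(1-\theta^{2})s/\theta$ for the sign‑branch compatible with $s\equiv1\pmod3$; the system then collapses to the single reciprocal quartic
\[
Q(s):=\theta^{3}s^{4}+\theta(3\theta^{2}-1)s^{3}+(4\theta^{3}-2\theta+1)s^{2}+\theta(3\theta^{2}-1)s+\theta^{3}=0 ,
\]
while the other branch yields $Q$ with $s\mapsto-s$, whose $\mathcal{E}_3$‑roots would need $s\equiv2\pmod3$ and are absent. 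Moreover $z_1=s^{2}/\bigl(\theta^{2}(s+1)^{2}\bigr)$ then lies automatically in $\mathcal{E}_3$, so the problem reduces to counting roots $s\in\mathcal{E}_3\setminus\{1\}$ of $Q$. The palindromic shape of $Q$ is precisely the shadow of the model symmetry $(z_0,z_1)\mapsto(1/z_0,z_1/z_0)$: roots come in pairs $\{s,1/s\}$, each giving a symmetric pair of solutions. I therefore set $y=s+s^{-1}$ and divide by $s^{2}$ to get the quadratic $\theta^{3}y^{2}+\theta(3\theta^{2}-1)y+(2\theta^{3}-2\theta+1)=0$, with discriminant $\theta^{2}\Psi(\theta)$, where $\Psi(\theta)=\theta^{4}+2\theta^{2}-4\theta+1=(\theta-1)(\theta^{3}+\theta^{2}+3\theta-1)$.

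For part~1 I argue by contraposition. A root $s\in\mathcal{E}_3\setminus\{1\}$ forces $y\in\mathbb{Q}_3$ with $y\equiv2\pmod3$, so by Theorem~\ref{tx2} $\Psi(\theta)$ must be a square; and since $s$ solves $s^{2}-ys+1=0$, the quantity $y^{2}-4$ must be a square as well. As the second factor of $\Psi$ is a unit, $\gamma(\Psi)=\gamma(\theta-1)$, and squareness of $\Psi$ already forces $\theta\equiv1\pmod9$ and then $\theta\equiv10\pmod{27}$ (the residue $\theta\equiv19\pmod{27}$ makes the unit part of $\Psi$ a nonsquare). The only remaining danger is $\theta\equiv1\pmod{27}$, where $\Psi$ is again a square; this is killed by the second condition. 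Using $Q(1)=12\theta^{3}-4\theta+1=\theta^{3}(2-y_1)(2-y_2)$ and $y_1+y_2-4=(1-7\theta^{2})/\theta^{2}$, the Newton polygon of the quadratic satisfied by $y-2$ gives $\gamma(y_i-2)=1$ for both roots, so $y_i^{2}-4$ has odd valuation and is not a square. Hence no solution exists unless $\theta\equiv10\pmod{27}$.

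For part~2, when $\theta\equiv37\pmod{81}$ (a refinement of $\theta\equiv10\pmod{27}$, since $37\equiv10$), the same bookkeeping with $\gamma(\theta-1)=2$ gives $\gamma\bigl((y_1-2)(y_2-2)\bigr)=3$ and $\gamma(y_1+y_2-4)=1$, so the Newton polygon splits the roots as $\gamma(y-2)\in\{1,2\}$: exactly one root $y_\ast$ has even valuation, and the extra digit $\theta\equiv37\pmod{81}$ is precisely what makes the unit part of $y_\ast^{2}-4$ a square. Theorem~\ref{tx2} (or Lemma~\ref{lem1}) then yields $s=\tfrac12\bigl(y_\ast\pm\sqrt{y_\ast^{2}-4}\bigr)\in\mathcal{E}_3$ with $s\equiv1\pmod3$, $s\neq1$, giving the reciprocal pair $\{s,1/s\}$ and hence exactly two solutions in $\mathcal{A}_3$. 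The main obstacle throughout is the layered $3$‑adic valuation bookkeeping: tracking valuations through the two nested square conditions (first for $y$, then for $s$) and isolating the exact residues modulo $27$ and $81$.
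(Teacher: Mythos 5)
Your reduction reproduces the paper's own quartic (\ref{x4}) and quadratic (\ref{kv2}) --- your $\Psi$ is the paper's $D$ --- so the skeleton is the same; what differs is how you arrive there and how you count roots, and both differences are defensible improvements. Deriving the quartic from the cofactor in (\ref{z_0-1}) read as a quadratic in $z_1$ is sound: I checked that the discriminant is indeed $4\theta^2(\theta^2-1)^2z_0$, that your branch identities hold, and that the resulting equation is exactly (\ref{x4}); this route avoids the paper's substitution $z_0=x^2$, $z_1=y^2$ with its unremarked sign choices, and it makes reversibility transparent, since $z_1=s^2/(\theta^2(s+1)^2)\in\mathcal E_3$ comes out automatically. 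In part 1, your Newton-polygon argument on $w=y-2$, using $w_1w_2=(12\theta^3-4\theta+1)/\theta^3$ and $w_1+w_2=(1-7\theta^2)/\theta^2$, subsumes in one stroke the paper's separate treatment of the degenerate case $12\theta^3-4\theta+1=0$ and of the case $n>1$ of (\ref{theta3}): for $\theta\equiv1\pmod{27}$ one gets $\gamma(w_1w_2)=2$ and $\gamma(w_1+w_2)=1$, forcing $\gamma(w_1)=\gamma(w_2)=1$, so each $y_i^2-4$ has odd valuation and is not a square. The side facts you assert (the second factor of $\Psi$ is a unit, $\theta\equiv19\pmod{27}$ gives a nonsquare unit part, $Q$ has no unit roots $\equiv2\pmod{3}$) are all true and are one-line checks, so part 1 is essentially complete.

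The genuine gap is in part 2, at exactly the point where the paper performs its decisive computation (\ref{+-}). You assert that ``the extra digit $\theta\equiv37\pmod{81}$ is precisely what makes the unit part of $y_*^2-4$ a square,'' but you never verify it, and this assertion is the entire content of the existence claim: everything preceding it only shows that at most the even-valuation root $y_*$ can yield solutions, i.e.\ at most two solutions, not that they exist. The missing verification, in your own notation, is: for $\theta\equiv37\pmod{81}$ one computes $12\theta^3-4\theta+1\equiv27\pmod{81}$ and $1-7\theta^2\equiv57=3\cdot19\pmod{81}$, so $w_1w_2=27\cdot(\text{unit}\equiv1\pmod{3})$, while the valuation-one root $w_\circ$ inherits the leading term of the sum and so has unit part $\equiv19\equiv1\pmod{3}$; dividing, $w_*=9\cdot(\text{unit}\equiv1\pmod{3})$, hence $y_*^2-4=w_*(w_*+4)$ has valuation $2$ and unit part $\equiv1\pmod{3}$, and Theorem \ref{tx2} produces the reciprocal pair $s^{\pm1}$. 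With that mod-$81$ computation inserted, your argument is complete and, in my view, tighter than the paper's; without it, part 2 is not proved.
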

\begin{proof} Let $p=3$ and $k=2$.
Since $z_i\in\mathcal E_p,\ i=1,2$ by Theorem \ref{tx2} there exist $\sqrt{z_i}$
in $\mathbb Q_p$. Denote
$x=\sqrt{z_0}$ and $y=\sqrt{z_1}$. Rewrite (\ref{z_0}) and (\ref{z_1}) as
\begin{equation}\label{x}
x=\frac{x^2+\theta y^2+\theta^2}{\theta^2 x^2+\theta y^2+1},
\end{equation}
\begin{equation}\label{y}
y=\frac{\theta x^2+y^2+\theta}{\theta^2 x^2+\theta y^2+1}.
\end{equation}
From (\ref{x}) we get
\begin{equation}\label{x-1}
(x-1)\left(\theta^2x^2+\theta y^2+1+(\theta^2-1)(x+1)\right)=0.
\end{equation}
Let $x\neq1$. Then from (\ref{x-1}) we have
\begin{equation}\label{x,y1}
\theta^2x^2+\theta y^2+1=(1-\theta^2)(x+1).
\end{equation}
Put this to (\ref{y}) and obtain
\begin{equation}\label{x,y2}
\theta y=\frac{x}{x+1}.
\end{equation}
Hence we conclude that
$$
x\equiv1(\operatorname{mod }3)\quad\mbox{and}\quad y\equiv2(\operatorname{mod }3).
$$
From (\ref{x,y1}) and (\ref{x,y2}) we get
\begin{equation}\label{x4}
\theta^3x^4+\theta(3\theta^2-1)x^3+(4\theta^3-2\theta+1)x^2+\theta(3\theta^2-1)x+\theta^3=0.
\end{equation}
Note that $x=1$ is a solution to (\ref{x4}) if and only if $12\theta^3-4\theta+1=0$.

Let $12\theta^3-4\theta+1=0$. Then from (\ref{x4}) we get
$$
(x-1)^2\left(\theta^3x^2+\theta\left(5\theta^2-1\right)x+\theta^3\right)=0.
$$
Consider the following equation
\begin{equation}\label{kv}
\theta^3x^2+\theta\left(5\theta^2-1\right)x+\theta^3=0.
\end{equation}
This equation is solvable in $\mathbb Q_3$ if and only if $\sqrt{1-7\theta^2}$
exists in
$\mathbb Q_3$. We will show that the quadratic equation (\ref{kv}) is not
solvable in $\mathbb Q_3$.
From $12\theta^3-4\theta+1=0$ and $\theta\in\mathcal E_3$ we get
$\theta\equiv10(\operatorname{mod }27)$. Then
we have
$$
1-7\theta^2\equiv3(\operatorname{mod }27).
$$
Hence by Theorem \ref{tx2} there does not exist $\sqrt{1-7\theta^2}$
in
$\mathbb Q_3$. Consequently, the equation (\ref{kv}) is not solvable
in $\mathbb Q_3$.

Let $12\theta^3-4\theta+1\neq0$. Denote $t=x+x^{-1}$. Then from
(\ref{x4}) we get
\begin{equation}\label{kv2}
\theta^3t^2+\theta\left(3\theta^2-1\right)t+2\theta^3-2\theta+1=0.
\end{equation}
Denote $D=\theta^4+2\theta^2-4\theta+1$. The equation (\ref{kv2})
has two distinct solutions if there exists $\sqrt{D}$ in $\mathbb Q_3$.
Moreover,
we have $t_i\equiv2(\operatorname{mod }3),\ i=1,2$.

The solutions of the equation (\ref{x4}) are
$$
x^\pm=\frac{1-3\theta^2+\sqrt{D}\pm\sqrt{\left(1-7\theta^2+\sqrt{D}\right)
\left(1+\theta^2+\sqrt{D}\right)}}{4\theta^2},
$$
$$
x_\pm=\frac{1-3\theta^2-\sqrt{D}\pm\sqrt{\left(1-7\theta^2-\sqrt{D}\right)
\left(1+\theta^2-\sqrt{D}\right)}}{4\theta^2}.
$$
Note that the existence of solutions $x^\pm$ equivalent to the existence
of $\sqrt{D}$ and the existence of
$\sqrt{2\left(1-7\theta^2-\sqrt{D}\right)}$ in $\mathbb Q_3$. Moreover,
if the solutions $x^\pm$ exists then
$x^\pm\equiv1(\operatorname{mod }3)$.

We will show that the number $\sqrt{2\left(1-7\theta^2+\sqrt{D}\right)}$
does not exist in $\mathbb Q_3$.
At first we will check the existence of $\sqrt{D}$. We have
$$
D=\theta^4+2\theta^2-4\theta+1=(\theta-1)\left(4\theta+(\theta-1)(\theta+1)^2\right).
$$
Since $\theta\in\mathcal E_3$ by Theorem \ref{tx2} we conclude that
the $\sqrt{D}$ exists if and only if
\begin{equation}\label{theta3}
\theta=1+3^{2n}(1+\varepsilon),\quad n\in\mathbb N,\ |\varepsilon|_3<1.
\end{equation}
Then we get
$$
\sqrt{D}=3^n(1+\varepsilon'),\qquad |\varepsilon'|_3<1.
$$
Hence, for all $n\in\mathbb N$ we have
$$
\left|1-7\theta^2+\sqrt{D}\right|_3=\left|1-7\left(1+3^{2n}
(1+\varepsilon)\right)+3^n(1+\varepsilon')\right|_3=
$$
\begin{equation}\label{+}
\left|-6+3^n+3^n\varepsilon'-14(1+\varepsilon)3^{2n}+7(1+\varepsilon)^2
3^{4n}\right|_3=\frac{1}{3}.
\end{equation}
By Theorem \ref{tx2} the number $\sqrt{2\left(1-7\theta^2+\sqrt{D}\right)}$
does not exist in $\mathbb Q_3$.

Now we will check the existence of $\sqrt{2\left(1-7\theta^2-\sqrt{D}\right)}$.
Using (\ref{theta3}) we obtain
$$
\left|1-7\theta^2+\sqrt{D}\right|_3=\left\{\begin{array}{cc}
\frac{1}{3},\quad\mbox{if }n>1;\\[2mm]
<\frac{1}{3},\quad\mbox{if }n=1.
\end{array}\right.
$$
Hence, by Theorem \ref{tx2} there does not exist
$\sqrt{2\left(1-7\theta^2-\sqrt{D}\right)}$ in $\mathbb Q_3$ if $n>1$.
So, we must check the case $n=1$.

Let $\left|\theta-37\right|_3<3^{-3}$, i.e., $\theta=37+\beta$, $|\beta|_3<3^{-3}$.
Then we obtain
$$
\left(1-7\theta^2+\sqrt{D}\right)\left(1-7\theta^2-\sqrt{D}\right)=
4\theta\left(12\theta^3-4\theta+1\right)=
$$
\begin{equation}\label{+-}
4(37+\beta)\left(12(37+\beta)^3-4(37+\beta)+1\right)=27(1+2\cdot3+3^2+\cdots)
\end{equation}
On the other hand by (\ref{+}) one can find
$$
1-7\theta^2+\sqrt{D}=6\left(1+\alpha_1\cdot3+\alpha_2\cdot3^2+\cdots\right),
\quad\alpha_i\in\{0,1,2\},\ i=1,2,\dots
$$
From this and by (\ref{+-}) we get
$$
2\left(1-7\theta^2-\sqrt{D}\right)=9\left(1+\gamma_1\cdot3+\gamma_2\cdot3^2+
\cdots\right),\quad\gamma_i\in\{0,1,2\},\ i=1,2,\dots
$$
Then by Theorem \ref{tx2} there exists $\sqrt{2\left(1-7\theta^2-\sqrt{D}\right)}$
in $\mathbb Q_3$.
Hence, $(x_+,y_+)$ and $(x_-,y_-)$ are the solutions to (\ref{x}),(\ref{y}),
where $y_\pm=\frac{x_\pm}{\theta(x_\pm+1)}$.
Redenote $z^\pm_0=x_\pm$ and $z^\pm_1=y_\pm$. It is clear that
$z^\pm_i\in\mathcal E_3,\ i=0,1$ and $z^\pm_0\neq1$.

Thus, we have shown that
the system of equations (\ref{z_0}),(\ref{z_1}) has two solutions in
$\mathcal A_3$ if $|\theta-37|_3<3^{-3}$ and
it has no solution in $\mathcal A_3$ if
$|\theta-1|_3<3^{-2}$.
\end{proof}

Thus by Proposition \ref{pro7} we have
\begin{thm} A phase
transition occurs for the three state $3$-adic
SOS model on a Cayley tree of order two if one of the following statements hold:\\
1) $\theta\in\{x\in\mathcal E_3: |x-1|_3<3^{-2}\}$;\\
2) $\theta\in\{x\in\mathcal E_3: |x-37|_3<3^{-3}\}$.
\end{thm}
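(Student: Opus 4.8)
My plan is to translate the occurrence of a phase transition into a counting statement about solutions of (\ref{z_0}),(\ref{z_1}) and then read off the required count from the results already established for $p=3$, $k=2$. By the definition of $G(H)$ a phase transition occurs as soon as $|G(H)|\geq2$, so it suffices to exhibit, under each hypothesis, two distinct solutions $(z_0,z_1)\in\mathcal E_3^2$. The bridge is Propositions \ref{pro1} and \ref{pro2}: with the normalization $z_{2,x}\equiv1$, every solution determines compatible distributions $\mu^{(n)}_{\tilde z}$ and hence a translation-invariant measure $\mu_{\tilde z}\in G(H)$; and since the one-site conditional ratios of $\mu^{(1)}_{\tilde z}$ recover $(z_0,z_1)$, the assignment $(z_0,z_1)\mapsto\mu_{\tilde z}$ is injective. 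Thus two distinct solutions automatically yield two distinct Gibbs measures, i.e.\ a phase transition.

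For hypothesis 2), where $|\theta-37|_3<3^{-3}$, I would quote Proposition \ref{pro7}(2) directly: it supplies two solutions $(z_0^{+},z_1^{+})$ and $(z_0^{-},z_1^{-})$ in $\mathcal A_3$. In the notation of its proof these arise from $x_{\pm}$, which are distinct because they differ by the nonzero quantity $\sqrt{2(1-7\theta^2-\sqrt D)}$ while both satisfy $x_{\pm}\equiv1\,(\mathrm{mod}\,3)$; hence $z_0^{+}=x_{+}^2\neq x_{-}^2=z_0^{-}$ and the two measures differ.

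For hypothesis 1), where $|\theta-1|_3<3^{-2}$, the set $\mathcal A_3$ carries no solution at all: since $\theta\equiv1\,(\mathrm{mod}\,27)$ we have $\theta\notin\{x:|x-10|_3<3^{-2}\}$, so Proposition \ref{pro7}(1) forces every solution to satisfy $z_0=1$. I would therefore produce the two required solutions on the branch $z_0=1$ by invoking Corollary \ref{cor}, whose hypotheses ($p=3$, the stated range of $\theta$, and $k=2$ even with $3\nmid k$) all hold: combining Proposition \ref{pro5} (the root $x_{*}\equiv1\,(\mathrm{mod}\,9)$, giving $(1,x_{*}^2)$) with Proposition \ref{pro6} (the root $\tilde x_{*}\equiv2\,(\mathrm{mod}\,3)$, giving $(1,\tilde x_{*}^2)$) yields at least two distinct solutions in $\mathcal E_3^2$.

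The heavy analytic work---the square-root existence computations behind Proposition \ref{pro7} and the Hensel-lemma constructions behind Propositions \ref{pro5} and \ref{pro6}---is already in place, so for this theorem the only genuine tasks are matching each $\theta$-region to the correct branch ($z_0=1$ versus $z_0\neq1$) and verifying that the two solutions found are truly distinct. I expect this distinctness check to be the main, if modest, obstacle: for hypothesis 2) it reduces to the sign of an inner square root together with $x_{\pm}\equiv1\,(\mathrm{mod}\,3)$, and for hypothesis 1) to the incongruence modulo $3$ of the two Hensel roots of $g$, which forces the corresponding values of $z_1$ to differ.
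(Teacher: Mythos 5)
Your route is the paper's route: hypothesis 2) is settled by Proposition \ref{pro7}, part 2), and hypothesis 1) by Corollary \ref{cor} (i.e.\ by combining the Hensel constructions of Propositions \ref{pro5} and \ref{pro6}) specialized to $k=2$; distinct solutions of (\ref{z_0}),(\ref{z_1}) then give distinct translation-invariant $p$-adic Gibbs measures. Your extra care about injectivity and distinctness is welcome (the paper leaves both implicit), and your distinctness check for hypothesis 2) is sound: there $x_+\neq x_-$ and both satisfy $x_\pm\equiv1\ (\operatorname{mod}\ 3)$, so $x_+\neq -x_-$ as well, whence $z_0^+=x_+^2\neq x_-^2=z_0^-$.

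However, your distinctness argument for hypothesis 1) is wrong as stated. You claim that $x_*\equiv1$ and $\tilde x_*\equiv2\ (\operatorname{mod}\ 3)$ ``forces the corresponding values of $z_1$ to differ''. It does not: since $k=2$ is even, the two values of $z_1$ are the \emph{squares} $x_*^2$ and $\tilde x_*^2$, and squaring identifies $x$ with $-x$, whose residues mod $3$ are precisely $1$ and $2$. Thus the mod-$3$ data you invoke is exactly consistent with the bad case $\tilde x_*=-x_*$, $\tilde x_*^2=x_*^2$; for even $k$, incongruence of the roots mod $3$ is the situation in which equal powers are \emph{possible}, not impossible. The gap is easy to close. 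One way: by the derivation of (\ref{g(x)}), any root $x$ of $g$ with $\theta^2+\theta x^k+1\neq0$ satisfies $x=\frac{2\theta+x^k}{\theta^2+\theta x^k+1}$, so $x$ is recovered from $z_1=x^k$; hence $x_*^2=\tilde x_*^2$ would force $x_*=\tilde x_*$, contradicting the incongruence mod $3$. Alternatively, if $\tilde x_*=-x_*$, then adding $g(x_*)=0$ and $g(-x_*)=0$ (with $k=2$) gives $x_*^2=-2\theta\equiv7\ (\operatorname{mod}\ 9)$, contradicting $x_*\equiv1\ (\operatorname{mod}\ 9)$ from Proposition \ref{pro5}. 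A last remark, which applies equally to the paper: both arguments tacitly need $\theta\neq1$ (i.e.\ $J\neq0$), a point of region 1). At $\theta=1$ one has $g(x)=(x-1)(x^2+2)$, the root $\tilde x_*$ from Proposition \ref{pro6} gives $\tilde x_*^2=-2$, which does \emph{not} solve (\ref{z_1*}) because the denominator $\theta^2+\theta z_1+1$ vanishes there; in fact at $\theta=1$ the system (\ref{z_0}),(\ref{z_1}) has only the solution $(1,1)$, so the phase-transition claim genuinely requires excluding $\theta=1$.
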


\section{The uniqueness of $p$-adic Gibbs measures}

In the previous section we have shown that if $p\neq3$ then there is no phase transition
for the three state $p$-adic SOS model. A natural question arises:
what should be the relation between a number $m$ and prime $p$ in order to have
a phase transition for the $m+1$-state $p$-adic SOS model? In this section we
shall find this relation.

Let us first prove some technical results.

\begin{lemma}\label{lem3}\cite{MRasos}
If $a_i\in\mathcal E_p,\ i=1,2,\dots,n$ then $\prod_{i=1}^na_i\in\mathcal E_p$.
\end{lemma}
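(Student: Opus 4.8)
The plan is to proceed by induction on $n$, so that the entire statement reduces to the two-factor case: if $a,b\in\mathcal E_p$ then $ab\in\mathcal E_p$. Recall that membership in $\mathcal E_p$ means two things, $|x|_p=1$ and $|x-1|_p<p^{-1/(p-1)}$, and I would verify each separately. The norm condition is immediate from multiplicativity of the $p$-adic absolute value: if $|a|_p=|b|_p=1$ then $|ab|_p=|a|_p|b|_p=1$, and iterating gives $\left|\prod_{i=1}^n a_i\right|_p=1$ directly without even needing induction.

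The only real content is the proximity-to-$1$ condition, and here the key step is the algebraic identity
$$
ab-1=a(b-1)+(a-1).
$$
Applying the strong triangle inequality together with $|a|_p=1$ yields
$$
|ab-1|_p\le\max\{|a|_p\,|b-1|_p,\ |a-1|_p\}=\max\{|b-1|_p,\ |a-1|_p\}.
$$
Since both $a$ and $b$ lie in $\mathcal E_p$, each of $|a-1|_p$ and $|b-1|_p$ is strictly smaller than $p^{-1/(p-1)}$, hence so is their maximum, giving $|ab-1|_p<p^{-1/(p-1)}$. This establishes $ab\in\mathcal E_p$, and the inductive step then follows by writing $\prod_{i=1}^n a_i=\left(\prod_{i=1}^{n-1}a_i\right)a_n$ and applying the two-factor result, since the inductive hypothesis places $\prod_{i=1}^{n-1}a_i$ in $\mathcal E_p$.

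There is essentially no substantial obstacle here; the lemma is a routine closure property. The one point requiring slight care is that the strong triangle inequality must preserve the \emph{strict} inequality: because the maximum of two quantities each strictly below $p^{-1/(p-1)}$ is itself strictly below that bound, the conclusion remains an open condition. (One should also note that $\mathcal E_p$ is nonempty and contains $1$, so the base case $n=1$, or the empty product, is unproblematic.) This is exactly the structure one expects, namely that $\mathcal E_p$ is a subgroup of the unit group $\mathbb Z_p^*$, and the proof above is simply the verification that it is closed under multiplication.
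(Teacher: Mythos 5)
Your proof is correct. The paper itself gives no argument for this lemma (it is quoted from the cited reference \cite{MRasos}), and your verification is the standard one: multiplicativity of the norm handles $|ab|_p=1$, while the identity $ab-1=a(b-1)+(a-1)$ together with the strong triangle inequality handles $|ab-1|_p<p^{-1/(p-1)}$, after which induction on $n$ is routine. Nothing is missing.
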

Recall that the $p$-adic norm of $x\in\mathbb Q^m_p$ defined as
$$
\Vert x\Vert_p=\max_{1\leq i\leq m}\{|x_i|_p\}.
$$

Let the collection of functions $F_i:\mathbb Q^{m+1}_p\mapsto\mathbb Q_p,\ i=0,1,\dots,m$
given by
$$
F_i(z,a_i,b_i,m)=\frac{\sum_{j=0}^{m}a_{ij}z_j}{\sum_{j=0}^{m}b_{ij}z_j},
$$
where
$$
a_{i}=(a_{i0},a_{i1},\dots,a_{im})\in\mathcal E^{m+1}_p\quad\mbox{and}\quad b_{i}=(b_{i0},b_{i1},\dots,b_{im})\in\mathcal E^{m+1}_p,
$$
for any $i=0,1,\dots,m$. For convenience we write $F_i(z)$ instead of $F_i(z,a_i,b_i,m)$.
\begin{lemma}\label{lem4}
Let $m+1$ is not divisible by $p$. Then $F_i$ is a function from
$\mathcal E^{m+1}_p$ to $\mathcal E_p$ for any $i=0,1,\dots,m$.
\end{lemma}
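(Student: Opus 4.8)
The plan is to treat the numerator $N_i:=\sum_{j=0}^m a_{ij}z_j$ and denominator $D_i:=\sum_{j=0}^m b_{ij}z_j$ of $F_i(z)$ separately, showing that each is a $p$-adic unit congruent to $m+1$ in a strong sense, and then to combine these facts. The starting point is Lemma \ref{lem3}: since $a_{ij}\in\mathcal E_p$ and $z_j\in\mathcal E_p$, each product $a_{ij}z_j$ again lies in $\mathcal E_p$, and likewise each $b_{ij}z_j\in\mathcal E_p$. By the definition of $\mathcal E_p$, this means $|a_{ij}z_j-1|_p<p^{-1/(p-1)}$ and $|b_{ij}z_j-1|_p<p^{-1/(p-1)}$ for every $j$.

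First I would estimate the numerator. Writing $N_i-(m+1)=\sum_{j=0}^m(a_{ij}z_j-1)$, the strong triangle inequality gives $|N_i-(m+1)|_p\le\max_{0\le j\le m}|a_{ij}z_j-1|_p<p^{-1/(p-1)}$, and the same argument yields $|D_i-(m+1)|_p<p^{-1/(p-1)}$. This is the one place where the hypothesis enters: because $p\nmid m+1$ we have $|m+1|_p=1$, so the bound $|N_i-(m+1)|_p<1=|m+1|_p$ together with property (1) of the non-Archimedean norm forces $|N_i|_p=|m+1|_p=1$, and identically $|D_i|_p=1$. Thus $|F_i(z)|_p=|N_i|_p/|D_i|_p=1$.

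It remains to control $|F_i(z)-1|_p$. I would write $F_i(z)-1=(N_i-D_i)/D_i$ and bound the numerator through $m+1$: since $N_i-D_i=(N_i-(m+1))-(D_i-(m+1))$, the estimates above give $|N_i-D_i|_p\le\max\{|N_i-(m+1)|_p,\,|D_i-(m+1)|_p\}<p^{-1/(p-1)}$. As $|D_i|_p=1$, this yields $|F_i(z)-1|_p=|N_i-D_i|_p<p^{-1/(p-1)}$. Together with $|F_i(z)|_p=1$, this is exactly the assertion $F_i(z)\in\mathcal E_p$.

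I expect the only genuinely delicate point to be the denominator estimate $|D_i|_p=1$, which is precisely where $p\nmid m+1$ is indispensable: if $p\mid m+1$ then $|m+1|_p<1$ and the sum $\sum_j b_{ij}z_j$ could have norm strictly smaller than $1$, so $F_i(z)$ need not even be a unit and the conclusion fails. Everything else is a routine application of Lemma \ref{lem3}, the characterization of membership in $\mathcal E_p$, and the strong triangle inequality, so no separate treatment of the cases $p=2$ and $p\ge3$ should be required.
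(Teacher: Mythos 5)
Your proof is correct and takes essentially the same approach as the paper: both invoke Lemma \ref{lem3} to place each product $a_{ij}z_j$, $b_{ij}z_j$ in $\mathcal E_p$, use the hypothesis $p\nmid m+1$ together with the strong triangle inequality to conclude that the denominator has norm $|m+1|_p=1$, and then bound $\left|F_i(z)-1\right|_p$ by $p^{-1/(p-1)}$. The only cosmetic difference is that you estimate the numerator of $F_i(z)-1$ by passing through $m+1$, i.e.\ writing $N_i-D_i=\bigl(N_i-(m+1)\bigr)-\bigl(D_i-(m+1)\bigr)$, whereas the paper writes it directly as $\sum_{j}(a_{ij}-b_{ij})z_j$ and bounds it term-wise via $|a_{ij}-b_{ij}|_p<p^{-1/(p-1)}$.
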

\begin{proof}
Let $z\in\mathcal E^{m+1}_p$. We will show that
$$
\left|F_i(z)-1\right|_p<p^{-1/(p-1)}.
$$

By Lemma \ref{lem3} we have $a_{ij}z_j,\ b_{ij}z_j\in\mathcal E_p$ for all $j=0,1,\dots,m$
as $a_{ij},\ b_{ij},\ z_j\in\mathcal E_p$. Since $m+1\not\equiv0(\operatorname{mod }p)$ and using
non-Archimedean norm's property we get
\begin{equation}\label{l21}
\left|\sum_{j=0}^mb_{ij}z_j\right|_p=\left|\sum_{j=0}^m\left(b_{ij}z_j-1\right)+m+1\right|_p=1.
\end{equation}
From this we obtain
$$
\left|F_i(z)-1\right|_p=\frac{\left|\sum_{j=0}^m(a_{ij}-b_{ij})z_j\right|_p}
{\left|\sum_{j=0}^mb_{ij}z_j\right|_p}\leq\max_{0\leq j\leq m}\{|a_{ij}-b_{ij}|_p\}<p^{-1/(p-1)}.
$$
\end{proof}
\begin{lemma}\label{lem5}
Let $m+1$ is not divisible by $p$. Then for any $z,t\in\mathcal E^{m+1}_p$ it holds
$$
\left|F_i(z)-F_i(t)\right|_p\leq\frac{1}{p}\Vert z-t\Vert_p,\qquad i=0,1,\dots,m.
$$
\end{lemma}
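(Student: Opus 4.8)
The plan is to estimate $F_i(z)-F_i(t)$ by writing the difference of two fractions over a common denominator and then controlling the numerator and denominator separately using the structural facts already established in Lemma~\ref{lem4}. Writing $N(z)=\sum_{j=0}^m a_{ij}z_j$ and $D(z)=\sum_{j=0}^m b_{ij}z_j$, I would expand
$$
F_i(z)-F_i(t)=\frac{N(z)}{D(z)}-\frac{N(t)}{D(t)}=\frac{N(z)D(t)-N(t)D(z)}{D(z)D(t)}.
$$
The first step is the denominator: by the computation \eqref{l21} in the proof of Lemma~\ref{lem4}, the hypothesis $p\nmid m+1$ gives $|D(z)|_p=|D(t)|_p=1$ for all $z,t\in\mathcal E^{m+1}_p$, so $|D(z)D(t)|_p=1$ and the denominator contributes nothing to the estimate. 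Thus everything reduces to bounding the numerator $|N(z)D(t)-N(t)D(z)|_p$.

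The key step is to rearrange the numerator so that the factor $z-t$ appears explicitly. I would insert and subtract a cross term, writing
$$
N(z)D(t)-N(t)D(z)=\bigl(N(z)-N(t)\bigr)D(t)-\bigl(D(z)-D(t)\bigr)N(t),
$$
and then use $N(z)-N(t)=\sum_{j=0}^m a_{ij}(z_j-t_j)$ together with the analogous identity $D(z)-D(t)=\sum_{j=0}^m b_{ij}(z_j-t_j)$. Since each $a_{ij},b_{ij}\in\mathcal E_p$ and, by Lemma~\ref{lem4}, $N(t),D(t)\in\mathcal E_p$ (so in particular have $p$-adic norm $1$), each of the two differences satisfies, via the non-Archimedean norm's property, $|N(z)-N(t)|_p\le\max_j|z_j-t_j|_p=\Vert z-t\Vert_p$ and likewise for $D$. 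Combining the two pieces with the ultrametric inequality yields $|N(z)D(t)-N(t)D(z)|_p\le\Vert z-t\Vert_p$.

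The only remaining point, and the place where the claimed factor $\tfrac1p$ must come from, is that this last bound is not yet sharp enough: a naive estimate gives $\le\Vert z-t\Vert_p$, not $\le\tfrac1p\Vert z-t\Vert_p$. I expect this to be the main obstacle. The gain of a factor $p$ should come from a cancellation in the leading terms of $N(z)D(t)-N(t)D(z)$: modulo $p$ all the coefficients $a_{ij},b_{ij}$ and all the entries $z_j,t_j$ are congruent to $1$, so the leading contributions of $(N(z)-N(t))D(t)$ and $(D(z)-D(t))N(t)$ coincide and cancel. Concretely, I would write $a_{ij}=1+\alpha_{ij}$, $b_{ij}=1+\beta_{ij}$, $z_j=1+u_j$, $t_j=1+v_j$ with all of $\alpha_{ij},\beta_{ij},u_j,v_j$ of norm $<p^{-1/(p-1)}\le p^{-1}$, substitute into the numerator, and check that after the cancellation of the common $\sum_j(z_j-t_j)$ term every surviving summand carries an extra factor of norm $\le p^{-1}$ (being a product of one of the small quantities with a difference $z_j-t_j$). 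This produces the improved bound $|N(z)D(t)-N(t)D(z)|_p\le\tfrac1p\Vert z-t\Vert_p$, which together with $|D(z)D(t)|_p=1$ gives the assertion.
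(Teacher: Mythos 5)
Your proposal is correct, and it arrives at the contraction bound by a decomposition different from the paper's, though both arguments rest on the same two pillars: the unit-denominator estimate $|D(z)|_p=|D(t)|_p=1$ coming from (\ref{l21}) (the only place $p\nmid m+1$ enters), and the fact that any two elements of $\mathcal E_p$ differ by something of norm at most $1/p$. The paper cross-multiplies and expands the numerator as the antisymmetric double sum $\sum_{j,l}a_{ij}b_{il}(z_jt_l-z_lt_j)$, rewrites $z_jt_l-z_lt_j=t_l(z_j-t_j)-t_j(z_l-t_l)$, and regroups so that the coefficient of each $z_s-t_s$ is a sum of differences of products of the form $a_{is}b_{il}-a_{il}b_{is}$ (the diagonal terms cancel); each such difference of two elements of $\mathcal E_p$ (Lemma \ref{lem3}) has norm at most $1/p$, which is the whole gain. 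You instead split the numerator as $\bigl(N(z)-N(t)\bigr)D(t)-\bigl(D(z)-D(t)\bigr)N(t)$ and obtain the factor $1/p$ from the exact cancellation of the leading terms $(m+1)\sum_j(z_j-t_j)$ after writing every coefficient and coordinate as $1+(\text{small})$: what survives is a sum of terms each of which is a small quantity times something of norm $\le\Vert z-t\Vert_p$. The two mechanisms are equivalent in substance, but your perturbative bookkeeping makes the source of the gain (everything is congruent to $1$) more transparent, while the paper's pairing of off-diagonal terms is more compact and avoids introducing the auxiliary deviations.

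Two small inaccuracies, neither fatal. First, Lemma \ref{lem4} does not give $N(t),D(t)\in\mathcal E_p$: these sums are congruent to $m+1$, not to $1$, modulo quantities of norm $\le 1/p$, so they are units of norm $1$ but in general not elements of $\mathcal E_p$. Your estimates only ever use $|N(t)|_p=|D(t)|_p=1$, which follows from (\ref{l21}) and its analogue for the $a_{ij}$, so nothing breaks, but the justification should cite that display rather than membership in $\mathcal E_p$. Second, the inequality $p^{-1/(p-1)}\le p^{-1}$ is false for odd $p$ (it goes the other way); the correct statement is that a $p$-adic norm which is $<p^{-1/(p-1)}$ must be $\le p^{-1}$ because the norm only takes values in integer powers of $p$. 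The paper's own proof commits the identical slip, so this is a shared cosmetic defect rather than a gap in your argument.
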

\begin{proof} Let $m+1\not\equiv0(\operatorname{mod }p)$. Then for any
$z,t\in\mathcal E_p$ from (\ref{l21}) we get
$$
\left|F_i(z)-F_i(t)\right|_p=\frac{\left|\sum_{j=0}^m\sum_{l=0}^ma_{ij}b_{il}(z_jt_l-z_lt_j)\right|_p}
{\left|\sum_{j=0}^ma_{ij}z_j\sum_{l=0}^mb_{ij}t_l\right|_p}=
$$
$$
\left|\sum_{j=0}^m\sum_{l=0}^ma_{ij}b_{il}\left(t_l(z_j-t_j)-t_j(z_l-t_l)\right)\right|_p=
$$
\begin{equation}\label{l31}
\left|\sum_{s=0}^m(z_s-t_s)\left(\sum_{l\neq s}a_{is}b_{il}-\sum_{j\neq s}a_{ij}b_{is}\right)\right|_p
\leq\frac{1}{p}\Vert z-t\Vert_p.
\end{equation}
As $a_{ij}, b_{il}\in\mathcal E_p$ then by Lemma \ref{lem3} we get $a_{ij}b_{il}\in\mathcal E_p$.
Hence, for any $i,j,l\in\{0,1,\dots,m\}$ we have
$$
\left|a_{ij}-b_{il}\right|_p=\left|a_{ij}-1+1-b_{il}\right|_p\leq
\max\{\left|a_{ij}-1\right|_p,\left|b_{il}-1\right|_p\}<p^{-1/(p-1)}\leq\frac{1}{p}.
$$
From this and by (\ref{l31}) we obtain
$$
\left|F_i(z)-F_i(t)\right|_p\leq\frac{1}{p}\Vert z-t\Vert_p.
$$
\end{proof}
\begin{lemma}\label{lem6}
Let $\{z^{(r)}\}_{r\in\mathbb N}$ and $\{t^{(r)}\}_{r\in\mathbb N}$ are the sequences
in $\mathcal E^{m+1}_p$. Then for any $n\in\mathbb N$ it holds
\begin{equation}\label{l41}
\left|\prod_{s=1}^nF_i(z^{(s)})-\prod_{s=1}^nF_i(t^{(s)})\right|_p\leq\frac{1}{p}
\max_{1\leq s\leq n}\left\Vert z^{(s)}-t^{(s)}\right\Vert_p,\qquad i=0,1,\dots,m.
\end{equation}
\end{lemma}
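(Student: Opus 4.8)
The plan is to prove the estimate by induction on $n$, reducing the product bound to the one-step Lipschitz estimate of Lemma \ref{lem5} via a telescoping decomposition. The feature that makes this work cleanly is that, in the non-Archimedean setting, every partial product of the factors $F_i(\cdot)$ has $p$-adic norm exactly $1$, so multiplying by such a factor neither enlarges nor shrinks the error at each step.

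First I would collect the facts that keep the norms under control. By Lemma \ref{lem4} each $F_i(z^{(s)})$ and $F_i(t^{(s)})$ lies in $\mathcal E_p$, whence $|F_i(z^{(s)})|_p = |F_i(t^{(s)})|_p = 1$; and by Lemma \ref{lem3} any finite product of elements of $\mathcal E_p$ stays in $\mathcal E_p$, so the partial products $P_n := \prod_{s=1}^n F_i(z^{(s)})$ and $Q_n := \prod_{s=1}^n F_i(t^{(s)})$ likewise have norm $1$.

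Next, the base case $n=1$ is exactly Lemma \ref{lem5}. For the inductive step I would write
\[
P_{n+1} - Q_{n+1} = P_n\bigl(F_i(z^{(n+1)}) - F_i(t^{(n+1)})\bigr) + (P_n - Q_n)\,F_i(t^{(n+1)}),
\]
and then apply the strong triangle inequality. Using $|P_n|_p = |F_i(t^{(n+1)})|_p = 1$, bounding the first term by Lemma \ref{lem5} and the second by the induction hypothesis, both summands are dominated by $\frac{1}{p}$ times a maximum of the quantities $\Vert z^{(s)}-t^{(s)}\Vert_p$; the ultrametric maximum of the two bounds is precisely $\frac{1}{p}\max_{1\le s\le n+1}\Vert z^{(s)}-t^{(s)}\Vert_p$, which closes the induction.

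There is no genuine difficulty in this argument; the only point needing attention is the norm-$1$ property of the partial products, which is exactly what stops the Lipschitz constant $\frac{1}{p}$ from accumulating across the $n$ factors. In an Archimedean estimate the telescoped sum would produce a constant growing with $n$, but here the strong triangle inequality collapses it to a single maximum, so the bound stays uniform in $n$.
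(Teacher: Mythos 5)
Your proof is correct and follows essentially the same route as the paper: induction on $n$, the same telescoping decomposition (up to a trivial swap of which factor multiplies which difference), the norm-$1$ property of the factors from Lemmas \ref{lem3} and \ref{lem4}, the base case from Lemma \ref{lem5}, and the strong triangle inequality to close the induction. No gaps; the argument matches the paper's proof in substance.
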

\begin{proof}
We prove this by induction on $n$. By Lemma \ref{lem5} condition (\ref{l41})
is satisfied for $n=1$. Suppose that (\ref{l41}) is valid for $n$.
We will prove that it is satisfied for $n+1$. We have
$$
\left|\prod_{s=1}^{n+1}F_i(z^{(s)})-\prod_{s=1}^{n+1}F_i(t^{(s)})\right|_p=
\left|\prod_{s=1}^{n}F_i(z^{(s)})F_i(z^{(n+1)})-\prod_{s=1}^{n}F_i(t^{(s)})F_i(t^{(n+1)})\right|_p=
$$
\begin{equation}\label{l42}
\left|\left(\prod_{s=1}^{n}F_i(z^{(s)})-\prod_{s=1}^{n}F_i(t^{(s)})\right)F_i(z^{(n+1)})+
\left(F_i(z^{(n+1)})-F_i(t^{(n+1)})\right)\prod_{s=1}^nF_i(t^{(s)})\right|_p
\end{equation}
By Lemma \ref{lem3} and Lemma \ref{lem4} we get
$$
\left|F_i(z^{(n+1)})\right|_p=1\qquad\mbox{and}\qquad\left|\prod_{s=1}^nF_i(t^{(s)})\right|_p=1.
$$
Using non-Archimedean norm's property from (\ref{l42}) we obtain
$$
\left|\prod_{s=1}^{n+1}F_i(z^{(s)})-\prod_{s=1}^{n+1}F_i(t^{(s)})\right|_p\leq
\max\left\{\left|\prod_{s=1}^{n}F_i(z^{(s)})-\prod_{s=1}^{n}F_i(t^{(s)})\right|_p,
\left|F_i(z^{(n+1)})-F_i(t^{(n+1)})\right|_p\right\}
$$
$$
\leq\frac{1}{p}\max\left\{\max_{1\leq s\leq n}\left\Vert z^{(s)}-t^{(s)}\right\Vert_p,
\left\Vert z^{(n+1)}-t^{(n+1)}\right\Vert_p\right\}=\frac{1}{p}\max_{1\leq s\leq n+1}\left\Vert z^{(s)}-t^{(s)}\right\Vert_p.
$$
\end{proof}
Consider the following functional equations
\begin{equation}\label{pro31}
z_{i,x}=\prod_{y\in S(x)}F_i(z),\quad i=0,1,\dots,m,\quad x\in V\setminus\{x_0\}.
\end{equation}
\begin{pro}\label{pro9} Let $m+1$ is not divisible by $p$. Then (\ref{pro31})
has a unique solution.
\end{pro}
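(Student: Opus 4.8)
The plan is to recast the system (\ref{pro31}) as a single fixed-point equation and apply the contraction mapping principle. Let
$$
X=\prod_{x\in V\setminus\{x_0\}}\mathcal E^{m+1}_p
$$
denote the set of all functions $z:x\mapsto z_x=(z_{0,x},\dots,z_{m,x})\in\mathcal E^{m+1}_p$, equipped with the metric $d(z,t)=\sup_{x}\Vert z_x-t_x\Vert_p$, and define the operator $\mathcal F:X\to X$ by $(\mathcal F z)_{i,x}=\prod_{y\in S(x)}F_i(z_y)$, $i=0,1,\dots,m$. Then $z$ solves (\ref{pro31}) if and only if $\mathcal F z=z$, so it suffices to prove that $\mathcal F$ has a unique fixed point.

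First I would verify that $(X,d)$ is a complete metric space on which $\mathcal F$ acts. Since $|x-1|_p<p^{-1/(p-1)}$ forces $|x|_p=1$, the set $\mathcal E_p$ is exactly the ball $B(1,p^{-1/(p-1)})$, which is clopen in $\mathbb Q_p$ and therefore complete; moreover $\Vert z_x-t_x\Vert_p<p^{-1/(p-1)}$ for every $x$, so the supremum defining $d$ is finite. A coordinatewise limit argument then shows $(X,d)$ is complete. That $\mathcal F$ maps $X$ into itself follows from Lemma \ref{lem4}, which places each $F_i(z_y)$ in $\mathcal E_p$, together with Lemma \ref{lem3}, which keeps the finite product $\prod_{y\in S(x)}F_i(z_y)$ in $\mathcal E_p$.

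The contraction estimate is the heart of the argument, and it is supplied directly by Lemma \ref{lem6}. Fixing $z,t\in X$ and $x\in V\setminus\{x_0\}$, I would index the successors $y\in S(x)$ and apply Lemma \ref{lem6} to obtain
$$
\left|(\mathcal F z)_{i,x}-(\mathcal F t)_{i,x}\right|_p
=\left|\prod_{y\in S(x)}F_i(z_y)-\prod_{y\in S(x)}F_i(t_y)\right|_p
\leq\frac1p\max_{y\in S(x)}\Vert z_y-t_y\Vert_p
\leq\frac1p\,d(z,t)
$$
for each $i$. Taking the maximum over $i$ and then the supremum over $x$ gives $d(\mathcal F z,\mathcal F t)\leq\frac1p\,d(z,t)$, so $\mathcal F$ is a contraction with ratio $1/p<1$, uniformly in the tree order $k$.

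By the contraction mapping principle on the complete space $(X,d)$, the operator $\mathcal F$ has a unique fixed point, which is the unique solution of (\ref{pro31}). Most of the labour has already been done in Lemmas \ref{lem4}--\ref{lem6}; the only step that genuinely demands care is verifying the completeness of the product space $(X,d)$ and that the supremum metric is finite, both of which rest on $\mathcal E_p$ being a clopen ball of diameter less than $p^{-1/(p-1)}$. Beyond that I anticipate no obstacle, since the contraction constant $1/p$ is strictly less than one.
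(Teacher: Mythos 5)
Your proof is correct and takes essentially the same route as the paper: both recast (\ref{pro31}) as a fixed-point problem, obtain the $\frac{1}{p}$-contraction estimate from Lemma \ref{lem6} (with Lemmas \ref{lem3} and \ref{lem4} guaranteeing the operator stays in $\mathcal E_p$), and conclude by the contraction mapping principle. If anything, your formulation is the more careful one, since the paper writes the operator as if it acted on a single copy of $\mathcal E^{m+1}_p$ while the estimate actually couples a vertex to its successors; your product space $X=\prod_{x\in V\setminus\{x_0\}}\mathcal E^{m+1}_p$ with the sup metric, together with the completeness check, is exactly what is needed to make that step rigorous.
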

\begin{proof}
Write
\begin{equation}\label{pro32}
\mathcal F_i(z_x)=\prod_{y\in S(x)}F_i(z_x).
\end{equation}
Let $z_x,t_x\in\mathcal E^{m+1}_p$ for any $x\in V\setminus\{x_0\}$.
Since $|S(x)|=k$ for all $x\in V\setminus\{x_0\}$ then by Lemma \ref{lem6} we have
\begin{equation}\label{pro33}
\left|\mathcal F_i(z_x)-\mathcal F_i(t_x)\right|_p\leq\frac{1}{p}\max_{y\in S(x)}
\Vert z_y-t_y\Vert_p\qquad\mbox{for all }\ x\in V\setminus\{x_0\}.
\end{equation}
Denote
$$
\mathcal F(z)=\left(\mathcal F_0(z),\mathcal F_1(z),\dots,\mathcal F_m(z)\right).
$$

By Lemma \ref{lem4} $\mathcal F$ is a function from $\mathcal E^{m+1}_p$ to $\mathcal E^{m+1}_p$.
From (\ref{pro33}) for any $x\in V\setminus\{x_0\}$ we get
$$
\Vert\mathcal F(z_x)-\mathcal F(t_x)\Vert_p\leq\frac{1}{p}\max_{y\in S(x)}\Vert z_y-t_y\Vert_p,
$$
which means that the function $\mathcal F$ is contractive. Then the function has a unique fixed
point in $\mathcal E^{m+1}_p$.
\end{proof}
\begin{thm}\label{uniq}
Let $H$ be a hamiltonian of $m+1$-state $p$-adic SOS model on a Cayley tree $\Gamma^k$. If
$p\nmid m+1$ then $|G(H)|=1$. Moreover, a measure $\mu\in G(H)$ is a
translation-invariant and symmetric.
\end{thm}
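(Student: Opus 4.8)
The plan is to recognize the compatibility system (\ref{p***}) as a special case of the abstract functional equation (\ref{pro31}), so that the contraction argument of Proposition \ref{pro9} applies verbatim; the uniqueness of the solution then yields $|G(H)|=1$, while translation-invariance is read off from the constancy of that solution and symmetry from an extra invariance argument. First I would pass to homogeneous $(m+1)$-coordinates: setting $z_{m,y}=1$, the summand $\theta^{m-i}$ in the numerator of (\ref{p***}) is precisely $\theta^{|i-m|_\infty}z_{m,y}$ and the constant $1$ in the denominator is $\theta^{m-m}z_{m,y}$, so (\ref{p***}) becomes
\[
z_{i,x}=\prod_{y\in S(x)}\frac{\sum_{j=0}^{m}\theta^{|i-j|_\infty}z_{j,y}}{\sum_{j=0}^{m}\theta^{m-j}z_{j,y}},\qquad i=0,1,\dots,m.
\]

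This is exactly (\ref{pro31}) with $a_{ij}=\theta^{|i-j|_\infty}$ and $b_{ij}=\theta^{m-j}$. Since $\theta=\exp_p(J)\in\mathcal E_p$ by Lemma \ref{lem2}, every power $\theta^{r}$ lies in $\mathcal E_p$ by Lemma \ref{lem3}, hence $a_i,b_i\in\mathcal E^{m+1}_p$ and Proposition \ref{pro9} is applicable. I would then note that the $i=m$ equation is automatically consistent: because $|m-j|_\infty=m-j$ for all $j\le m$, one has $F_m\equiv 1$, so $z_{m,x}\equiv1$ at every solution and the $(m+1)$-coordinate system has exactly the same solutions as (\ref{p***}). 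By Proposition \ref{pro9}, when $p\nmid m+1$ this system has a unique solution; as $\mathcal F$ is the same contraction at every vertex (each having $k$ successors), the fixed point is a single constant vector $z_x\equiv z^{*}$. Using the correspondence between solutions of (\ref{p***}) and elements of $G(H)$ provided by Propositions \ref{pro1} and \ref{pro2}(1) — every $\mu\in G(H)$ produces a solution via Proposition \ref{pro1}, and each solution produces a measure — I conclude $|G(H)|=1$, and by Proposition \ref{pro2}(2) the constancy of $z^{*}$ shows the unique measure is translation-invariant.

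The one genuinely new point, and the step I expect to be the main obstacle, is \emph{symmetry}, since it is not a formal consequence of Proposition \ref{pro9}. Here I would exploit that the Hamiltonian (\ref{ham}) is invariant under the pointwise involution $\sigma\mapsto m-\sigma$, because $|\sigma(x)-\sigma(y)|_\infty=|(m-\sigma(x))-(m-\sigma(y))|_\infty$. This involution pushes a $p$-adic Gibbs measure to another one: in (\ref{mu_n}) it leaves $H_n$ unchanged and merely replaces the boundary weights $\tilde z_{\sigma(x),x}$ by $\tilde z_{m-\sigma(x),x}$, so it sends $G(H)$ into itself. As $|G(H)|=1$, it must fix the unique measure, which is therefore symmetric in the sense defined before Proposition \ref{pro1}. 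Equivalently, at the level of the fixed point, $i\mapsto m-i$ carries $z^{*}$ to another solution of the same system, whence uniqueness forces $z^{*}_{m-i}=c\,z^{*}_{i}$ with $c=z^{*}_0$; taking $i=0$ gives $c^{2}=z^{*}_m=1$, and since $z^{*}\in\mathcal E^{m+1}_p$ makes $c\in\mathcal E_p$ while $-1\notin\mathcal E_p$, we obtain $c=1$ and $z^{*}_{i}=z^{*}_{m-i}$, which is precisely the symmetry criterion of Proposition \ref{pro2}(2).
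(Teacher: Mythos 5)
Your proof is correct, and its uniqueness and translation-invariance parts coincide with the paper's own argument: the paper likewise passes to the homogeneous form by setting $F_m\equiv 1$ with $a_{ij}=\theta^{|i-j|_\infty}$, $b_{ij}=\theta^{m-j}$, and then invokes Propositions \ref{pro1} and \ref{pro9} to get a unique, constant (hence TI) solution. Where you genuinely diverge is the symmetry claim. The paper proves it by noting that the set $\mathcal I=\{z\in\mathcal E_p^{m+1}:z_j=z_{m-j}\}$ is closed and invariant under the contraction $\mathcal F$ (because $F_i=F_{m-i}$ on $\mathcal I$), so the unique fixed point must lie in $\mathcal I$; this needs no appeal to $G(H)$ itself nor to any special property of $\mathcal E_p$. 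You instead exploit the involution $\sigma\mapsto m-\sigma$: your measure-level version (push a Gibbs measure forward, use $|G(H)|=1$) is clean, fully correct, and more conceptual — it would apply verbatim to any symmetry of the Hamiltonian. Your fixed-point version contains one loose phrase: the reflected vector $w_i=z^*_{m-i}$ is \emph{not} a solution of the same system — it solves the system normalized by $\sum_j\theta^{j}w_j$ (the $i=0$ row) rather than by $\sum_j\theta^{m-j}w_j$ — and only after dividing by $w_m=z^*_0$ does one obtain a genuine solution of (\ref{pro31}); your constant $c=z^*_0$ is exactly this renormalization, so the chain $z^*_{m-i}=z^*_0z^*_i$, then $c^2=1$, then $c=1$ because $-1\notin\mathcal E_p$ (true for every $p$, including $p=2$), is sound, but the renormalization step should be stated explicitly rather than absorbed into ``another solution of the same system.'' Net comparison: the paper's invariant-subset argument is shorter and self-contained inside the contraction proof; your involution argument buys generality and makes transparent that the symmetry of the measure reflects a symmetry of the Hamiltonian rather than an accident of the fixed-point equation.
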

\begin{proof}
Let $F_m(z)\equiv1$ and
$$
F_i(z)=\frac{\sum_{j=0}^{m-1}\theta^{|i-j|_\infty}z_{j,y}+\theta^{m-i}}
{\sum_{j=0}^{m-1}\theta^{m-j}z_{j,y}+1}\ ,\quad i=0,1,\dots,m-1.
$$
Then by Proposition \ref{pro1} and Proposition \ref{pro9} there is a unique
$p$-adic Gibbs measure for the $m+1$-state $p$-adic SOS model on a Cayley tree
$\Gamma^k$ if $p\nmid m+1$.

Denote
$$
\mathcal I=\{(z_0,z_1,\dots,z_m): z_j\in\mathcal E_p\mbox{ and } z_j=z_{m-j},\ j=0,\dots,m\}.
$$
It is easy to see that for any $i=1,2,\dots,m+1$ it holds
$F_i(z)=F_{m-i}(z)$ if
$z\in\mathcal I$.
From the proof of the previous Proposition a unique solution of (\ref{pro31}) belongs
to $\mathcal I$. Then by Proposition \ref{pro2} a unique $p$-adic Gibbs measure
is a symmetric TI.
\end{proof}
{\bf Acknowledgments.} The author is grateful to Professor U. A. Rozikov and Professor F. M.
Mukhamedov for the useful advice and
valuable comments.

\end{document}